\xpatchcmd\thmt@restatable{%
\csname #2\@xa\endcsname\ifx\@nx#1\@nx\else[{#1}]\fi
}{%
\ifthmt@thisistheone
\csname #2\@xa\endcsname\ifx\@nx#1\@nx\else[{#1}]\fi
\else
\csname #2\@xa\endcsname[{Restated}]
\fi}{}{}
\let\oldsqrt\sqrt
\def\hksqrt{\mathpalette\DHLhksqrt}
\def\DHLhksqrt#1#2{\setbox0=\hbox{$#1\oldsqrt{#2\,}$}\dimen0=\ht0
   \advance\dimen0-0.2\ht0
   \setbox2=\hbox{\vrule height\ht0 depth -\dimen0}%
   {\box0\lower0.4pt\box2}}
\renewcommand\sqrt\hksqrt
\renewcommand{\leq}{\leqslant}
\renewcommand{\geq}{\geqslant}
\newcommand{\A}{{\cal A}}
\newcommand{\G}{{\cal G}}
\renewcommand{\H}{{\cal H}}
\newcommand{\In}{\textsc{in}}
\newcommand{\Out}{\textsc{out}}
\newcommand{\inedges}{\textsc{In-Edges}}
\newcommand{\outedges}{\textsc{Out-Edges}}
\newcommand{\nmc}{\textsc{NMC}}
\newcommand{\fmc}{\textsc{FMC}}
\renewcommand{\E}{\mathcal{E}}
\newcommand{\F}{\mathfrak{F}}
\newcommand{\minus}{\setminus}
\newcommand{\ftrs}{\textsc{FTRS}}
\newcommand{\ftbfp}{\textsc{FT-BFP}}
\newcommand{\maxflow}{\textsc{max-flow}}
\newcommand{\setcover}{\textsc{SET-COVER}}
\renewcommand{\path}{\textsc{path}}
\newtheorem{property}[theorem]{Property}
\newtheorem{problem}[theorem]{Problem}
\newtheorem*{question*}{Question}
\providecommand{\ignore}[1]{} 
\author{Shivam Bansal}%
{Department of Computer Science and Engineering, IIT Delhi, India}%
{Shivam.Bansal@ee.iitd.ac.in}%
{} 
{} 
\author{Keerti Choudhary}%
{Department of Computer Science and Engineering, IIT Delhi, India}%
{keerti@iitd.ac.in}%
{} 
{} 
\author{Harkirat Dhanoa}%
{Department of Computer Science and Engineering, IIT Delhi, India}%
{Harkirat.Dhanoa@ee.iitd.ac.in}%
{} 
{} 
\author{Harsh Wardhan}%
{Department of Electrical Engineering, IIT Delhi, India}%
{Harsh.Wardhan.ee318@ee.iitd.ac.in}%
{} 
{} 
\authorrunning{S.~Bansal, K.~Choudhary, H.~Dhanoa, H. Wardhan}
\title{Fault-Tolerant Bounded Flow Preservers
}
\keywords{Fault-tolerant Data-structures, Max-flow, Bounded Flow Preservers}
\begin{document}

\maketitle

\begin{abstract}

\noindent
Given a directed graph $\G = (V, E)$  with $n$ vertices, $m$ edges and a designated source vertex $s\in V$, 
we consider the question of finding a sparse subgraph $\H$ of $\G$ that preserves the flow from $s$ up to a given threshold~$\lambda$ even after failure of $k$ edges. 
We refer to such subgraphs as $(\lambda,k)$-fault-tolerant bounded-flow-preserver ($(\lambda,k)$-FT-BFP).
Formally, for any $F \subseteq E$ of at most $k$ edges and any $v\in V$, the 
$(s, v)$-max-flow in $\H \minus F$ is equal to 
$(s, v)$-max-flow in $\G \minus F$, if the latter is bounded by $\lambda$, and at least $\lambda$ otherwise.
Our contributions are summarized as follows:

\begin{enumerate}\itemsep=4pt
\item 
We provide a polynomial time algorithm that given any graph $\G$ constructs a $(\lambda,k)$-FT-BFP
of $\G$ with at most 
$\lambda 2^kn$ edges. 

\item 
We also prove a matching lower bound of $\Omega(\lambda 2^kn)$ on the size of $(\lambda,k)$-FT-BFP.
In particular, we show that for every $\lambda,k,n\geq 1$, there exists an $n$-vertex directed graph
whose optimal $(\lambda,k)$-\ftbfp ~contains $\Omega(\min\{2^k\lambda n,n^2\})$ edges.

\item 
Furthermore, we show that the problem of computing approximate $(\lambda,k)$-FT-BFP
is NP-hard for any approximation ratio that is better than $O(\log(\lambda^{-1} n))$.

\end{enumerate}


\end{abstract}
\section{Introduction}
We address the problem of computing single-source fault-tolerant bounded-flow-preservers for directed graphs. The 
objective is to construct a sparse subgraph that preserves the flow value up to a parameter $\lambda$ from a given fixed source $s$, even after failure of up to $k$ edges.


The following definition provides a precise characterization of this subgraph.


\begin{definition}
Let $\G=(V,E)$ be a directed graph and $s\in V$ be a designated source vertex. A $(\lambda,k)$-Fault-Tolerant Bounded-Flow-Preserver $((\lambda,k)$-\ftbfp$)$ for $\G$ is a subgraph $\H=(V,E_{\H}\subseteq E)$ of $\G$ satisfying that for every $F\subseteq E$ of at most $k$ edges, and every $t\in V$, 
$$\maxflow(s,t,\H-F)
=
\begin{cases} 
\maxflow(s,t,\G-F)&\text{if~~}\maxflow(s,t,\G-F)\leq \lambda,\\
\text{At least }\lambda,&\text{otherwise}.\\
\end{cases}$$\vspace{-2mm}

\label{definition:flow-preserver}
\end{definition}\vspace{-2mm}

For the special case of $\lambda=1$, 
the problem is referred to as $k$-Fault-Tolerant Reachability Subgraph ($k$-FTRS) in the literature.
Here the goal
is to preserve reachability from $s$ after $k$ edge failures. Baswana~et~al.~\cite{BaswanaCR:16} showed
that there exists a $k$-FTRS with at most $2^k n$ edges.
Lokshtanov et al.~\cite{lokshtanov2019brief} presented an algorithm for computing a 
$(\lambda,k)$-\ftbfp ~for directed graphs. 
Their algorithm runs in time ${O}(4^{k+\lambda}(k+\lambda)^2(m+n)\cdot m)$, 
and each vertex of the \ftbfp~has in-degree at most $4^{k+\lambda}(k+\lambda)$. 
They also showed that a $(k+\lambda-1)$-FTRS of a graph $\G$ also serves as it's $(\lambda,k)$-\ftbfp. 
Using this result 
in conjunction with the algorithm from~\cite{BaswanaCR:16}, they obtain an alternate construction of a 
$(k,\lambda)$-\ftbfp ~with at most $2^{k+\lambda} n$ edges.
However, this bound is quadratic in $n$ for any $\lambda$ larger than $\log n$.

We consider the problem of obtaining a tight bound on $(\lambda,k)$-\ftbfp.
Specifically, we aim to answer the following question:
\\ \\[-2mm]
{\em 
Given a directed graph $\G = (V, E)$ with a source $s$, and a flow threshold $\lambda \geq \log n$, 
	can we construct a {\em sparse} $(\lambda,k)$-\ftbfp ~$\H = (V,E_H\subset E)$?
	If so, can we present graphs for which the construction turns out to be tight? }
%
\vspace{2mm}

In this paper, we affirmatively answer the question above.
We provide construction for \ftbfp ~ that has a linear dependence on $\lambda$.
In particular, we prove the following.
\vspace{0.5mm}

\begin{theorem}
There exists an algorithm that for any directed graph~$\G$ on $n$ vertices
and $m$ edges, and 
any integers $\lambda,k\geq 1$, computes in
$O(\lambda2^k mn)$ time a $(\lambda,k)$-\ftbfp ~for $\G$ with at most $\lambda2^k n$ edges.
\end{theorem}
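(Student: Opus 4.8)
The plan is to build the $(\lambda,k)$-FT-BFP iteratively, one edge failure at a time, maintaining the invariant that the subgraph under construction is a $(\lambda,i)$-FT-BFP after processing $i$ levels of faults, while keeping the in-degree of every vertex bounded by $\lambda 2^i$. The base case $k=0$ asks for a subgraph $\H_0$ in which $\maxflow(s,t,\H_0)=\min\{\maxflow(s,t,\G),\lambda\}$ for all $t$; this is obtained by taking, for each vertex $t$, a union of $\min\{\maxflow(s,t,\G),\lambda\}$ edge-disjoint $s$-$t$ paths (equivalently, a min$\{\cdot,\lambda\}$-flow decomposition) and keeping only the in-edges of $t$ used by these flows — but to control in-degree we instead argue globally, as in the reachability case of Baswana et al.~\cite{BaswanaCR:16}, that one can select for each $t$ a set of at most $\lambda$ in-edges sufficient to carry a flow of value $\min\{\maxflow(s,t,\G),\lambda\}$. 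This gives $\lambda n$ edges for $k=0$.

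For the inductive step, suppose $\H$ is a $(\lambda,k-1)$-FT-BFP of $\G$ with in-degree $\le \lambda 2^{k-1}$. I want to augment it to a $(\lambda,k)$-FT-BFP with in-degree $\le \lambda 2^k$. The key structural fact I would use is the standard "farthest min-cut" / edge-fault duality underlying FTRS constructions: preserving $(s,t)$-max-flow up to $\lambda$ under $\le k$ faults reduces, via Menger's theorem, to preserving the existence of $\lambda$ edge-disjoint $s$-$t$ paths under $\le k$ faults, which in turn (by splitting $t$ into $\lambda$ copies, or by a max-flow scaling argument) reduces to a reachability-style problem. Concretely, I would show that if $\H'$ is obtained from $\H$ by adding, for each vertex $t$ and each potential "first failed edge" $e$ incident to the in-structure of $t$, the in-edges prescribed by a $(\lambda,k-1)$-FT-BFP of $\G-e$ localized at $t$, then the in-degree at most doubles (each of the $\le\lambda 2^{k-1}$ existing in-edges spawns at most one new in-edge, in the spirit of the recursion $d_k \le 2 d_{k-1}$ from \cite{BaswanaCR:16}), yielding $\lambda 2^k$, and that $\H'$ handles $k$ faults: given $F=\{e\}\cup F'$ with $|F'|\le k-1$, either the flow is already preserved in $\H-F$, or the added structure for the "critical" first fault $e$ together with the inductive guarantee on $\G-e$ preserves it.

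The main obstacle I anticipate is making the "doubling of in-degree" argument precise in the flow setting rather than the reachability setting — in \cite{BaswanaCR:16} the clean recursion comes from the fact that reachability after one fault is witnessed by a single alternative in-edge per vertex, whereas here a vertex may need up to $\lambda$ alternative in-edges to restore a flow of value $\lambda$. The resolution I would pursue is to run the reachability-style construction on a transformed graph where $t$ is replaced by $\lambda$ in-copies $t_1,\dots,t_\lambda$ (each $s$-$t_i$ path corresponding to one unit of flow), so that the per-vertex "$\times 2$" bound of \cite{BaswanaCR:16} applies copy-by-copy and sums to the desired $\lambda 2^k$; one must then check that edge-disjointness across copies is respected, which follows from working with a flow (rather than independent path searches) and from the fact that the FTRS selection of \cite{BaswanaCR:16} can be performed consistently with a fixed max-flow. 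The running time $O(\lambda 2^k mn)$ then falls out of doing one max-flow-type computation (cost $O(\lambda 2^k m)$ on the blown-up graph, since the flow value is capped at $\lambda 2^k$) per vertex $t$, i.e.\ $n$ times.
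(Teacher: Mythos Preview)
Your inductive scheme has a real gap at the doubling step. The claim that ``each of the $\le \lambda 2^{k-1}$ existing in-edges spawns at most one new in-edge'' is not how the Baswana--Choudhary--Roditty recursion works: their $2^k$ bound comes from iterating farthest min-cuts and showing $|C_{i+1}|\le 2|C_i|$ under an out-degree-$2$ assumption, not from an induction that handles one fault at a time. If you literally add, for each in-edge $e$ of $t$, the in-edges selected by a $(\lambda,k-1)$-\ftbfp\ of $\G-e$, you are unioning up to $\lambda 2^{k-1}$ sets each of size up to $\lambda 2^{k-1}$; nothing you wrote forces these to overlap enough to keep the total at $\lambda 2^{k}$. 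Your fallback of splitting $t$ into $\lambda$ copies and running $k$-\ftrs\ per copy also does not close the gap: $k$-\ftrs\ guarantees reachability to each copy after $k$ faults, but you need $\lambda$ \emph{edge-disjoint} paths after $k$ faults, and ``performing the \ftrs\ selection consistently with a fixed max-flow'' is precisely the hard part you have not supplied.

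The paper avoids both pitfalls by a different, non-inductive route. It first proves a locality lemma (if you can bound the in-degree of a single target $t$ in some $(\lambda,k)$-\ftbfp, you can sweep over all vertices and get a global $c_{\lambda,k}\cdot n$ bound). For a single $t$ with $f=\maxflow(s,t,\G)$, it uses the reduction of Lokshtanov et al.\ that a $(k+\lambda-1)$-\ftrs\ is already a $(\lambda,k)$-\ftbfp, sharpened to $(\min\{f,\lambda\}+k-1)$-\ftrs. The key new observation is that the \emph{unmodified} \ftrs\ algorithm of~\cite{BaswanaCR:16}, run for $k$ iterations, actually outputs a $(k+f-1)$-\ftrs\ whenever the initial $(s,t)$-max-flow is $f$: the auxiliary graphs $\H_i$ in their analysis start with flow value $f$ and gain one unit per iteration, reaching $k+f$, and the cut sizes satisfy $|C_1|=f$ and $|C_{i+1}|\le 2|C_i|$ (after the standard out-degree-$2$ transformation), giving in-degree $2^{k}f\le 2^{k}\lambda$ at $t$. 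No per-fault induction and no copy-splitting are needed.
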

\vspace{0.5mm}

Furthermore, we show that the extremal bound of $\lambda2^k n$ in the above construction is tight by 
presenting the following lower bound.
\vspace{0.5mm}

\begin{theorem}\label{theorem:main-lo-bound}
For every $\lambda,k,n\geq 1$, there exists a construction of an $n$-vertex directed graph 
whose optimal $(\lambda,k)$-\ftbfp ~contains $\Omega(\min\{2^k\lambda n,n^2\})$ edges.
\end{theorem}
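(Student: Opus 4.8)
The plan is to build, for given parameters $\lambda,k,n$, an $n$-vertex directed graph $\G$ together with a source $s$ such that in any $(\lambda,k)$-\ftbfp ~every vertex of a large subset must retain roughly $\lambda 2^k$ incoming edges, forcing $\Omega(\lambda 2^k n)$ edges in total. When $\lambda 2^k \ge n$ the claim degenerates to $\Omega(n^2)$, which is easy: a complete bipartite-like gadget from $s$'s side to a set of $\Theta(n)$ sinks, with $\Theta(n)$ parallel-ish paths, already needs $\Omega(n^2)$ edges to survive $k$ failures, so I would dispose of that regime first and henceforth assume $\lambda 2^k = O(n)$.

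The heart of the construction is a single \emph{critical gadget} on $O(\lambda 2^k)$ vertices, designed so that a distinguished terminal $t$ in the gadget has $(s,t)$-max-flow exactly $\lambda$ in $\G$, but the particular $\lambda$ edge-disjoint $s$-$t$ paths realizing this flow can be ``re-routed'' in $2^k$ essentially different ways as a function of which $k$ edges fail. Concretely, I would take the $k$-FTRS lower-bound gadget of Baswana et al.~\cite{BaswanaCR:16} (which forces $2^k$ incoming edges at a single vertex to preserve \emph{reachability} after $k$ faults) and ``$\lambda$-thicken'' it: replace each critical edge by a bundle of $\lambda$ parallel edges routed through $\lambda$ private intermediate vertices, and similarly fan the single terminal into a structure that demands flow value $\lambda$ rather than value $1$. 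One must argue that (i) in $\G$, for every $F$ with $|F|\le k$, $\maxflow(s,t,\G-F)\ge\lambda$, so the \ftbfp ~is obliged to also carry flow $\lambda$ to $t$ after $F$; and (ii) the set of edge-bundles that can carry this flow genuinely varies over the $2^k$ fault patterns, so that no bundle is dispensable — pruning any one incoming bundle at $t$ (or at the appropriate bottleneck vertex) kills the flow for some admissible $F$. This yields a gadget in which some vertex has in-degree $\Omega(\lambda 2^k)$ in every valid \ftbfp.

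To get the factor of $n$, I would take $\Theta(n/(\lambda 2^k))$ disjoint copies of this gadget (legitimate since we assumed $\lambda 2^k=O(n)$), all sharing the common source $s$, or more carefully sharing a small ``source cloud'' so the total vertex count is $n$; each copy independently forces $\Omega(\lambda 2^k)$ edges, for a total of $\Omega(\lambda 2^k \cdot n/(\lambda 2^k) \cdot \lambda 2^k)=\Omega(\lambda 2^k n)$ edges that any \ftbfp ~must contain. The bookkeeping here is routine: disjointness of the copies means a fault set of size $k$ can only ``attack'' one copy at a time for the purpose of the reachability re-routing argument, so the lower bound per copy is unaffected by the presence of the others.

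The main obstacle I anticipate is step (ii) of the gadget analysis: in the pure reachability setting of \cite{BaswanaCR:16} one only needs a single path to survive, and the $2^k$ ``branching'' structure that witnesses indispensability is comparatively clean; once we demand flow value $\lambda$, we must ensure that the $\lambda$ thickening does not accidentally create extra $s$-$t$ paths through other bundles (which would let the \ftbfp ~drop a bundle and still meet the threshold $\lambda$ via a detour). Controlling these spurious augmenting paths — i.e. certifying that the min-cut separating $t$ after each admissible $F$ is exactly the intended $\lambda$-bundle and nothing cheaper — is where the care goes, and I would handle it by making the gadget's non-critical edges have, in effect, ``infinite cost to reuse'' by routing them through vertex-disjoint private gadgets so that edge-disjointness of paths is forced by vertex-disjointness of their interiors. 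With that in place, Menger's theorem gives the max-flow values, and the indispensability of every bundle in every copy follows by exhibiting, for each bundle, a size-$\le k$ fault set under which that bundle is the unique way to reach flow $\lambda$.
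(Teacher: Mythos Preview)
Your proposal contains a genuine counting error that breaks the argument. You build a gadget on $O(\lambda 2^k)$ vertices in which \emph{one} distinguished terminal is forced to keep $\Omega(\lambda 2^k)$ incoming edges, and then take $\Theta(n/(\lambda 2^k))$ vertex-disjoint copies. That yields only $\Theta(n/(\lambda 2^k)) \cdot \Omega(\lambda 2^k) = \Omega(n)$ forced edges in total, not $\Omega(\lambda 2^k n)$; the third factor of $\lambda 2^k$ in your displayed product has no justification. The point is that each copy spends $\Theta(\lambda 2^k)$ vertices to certify a single high-degree terminal, so the vertex budget is exhausted long before you accumulate $\lambda 2^k n$ edges.

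The paper avoids this by \emph{sharing} the front structure across many sinks rather than duplicating it. Concretely, it takes $\lambda$ vertex-disjoint complete binary trees of depth $k$ (roots $r_1,\dots,r_\lambda$, leaf set $X$ with $|X|=\lambda 2^k$), edges $(s,r_i)$, and a sink set $Y$ of size $\Theta(n)$ with the complete bipartite edge set $X\times Y$. For every leaf $x$ of tree $B_i$ and every $y\in Y$, failing the $k$ off-path edges along the $r_i$-to-$x$ path leaves $x$ as the unique reachable leaf of $B_i$; the other $\lambda-1$ trees are untouched, so $\maxflow(s,y,\G\setminus F)=\lambda$, and the only way to route the unit through $B_i$ is via $(x,y)$. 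Hence every edge of $X\times Y$ is indispensable, giving $\lambda 2^k \cdot \Theta(n)$ edges. Note how the $\lambda$ separate trees sidestep your ``spurious augmenting path'' worry entirely: the flow decomposes cleanly as one unit per tree, so there is no need for your $\lambda$-thickening and the delicate min-cut control you anticipated. Your thickening idea is also suspect on its own terms: if each tree edge becomes $\lambda$ parallel edges, then $k$ failures (with $k$ possibly much smaller than $\lambda$) cannot isolate a single leaf, so indispensability of individual leaf-to-sink edges does not follow.
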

\vspace{0.5mm}


We next consider the problem of approximating  $(\lambda,k)$-\ftbfp ~structures.
We show that unless $P=NP$, there is no polynomial-time algorithm to obtain a
$\log (\lambda^{-1}n))$ approximation to optimal $(\lambda,k)$-\ftbfp.

\vspace{0.5mm}

\begin{theorem}\label{HardnessThm1}
For any $\lambda,k,n\geq 1$ satisfying $k=\Omega(\log (\lambda^{-1}n))$, 
the problem of computing an $O(\log (\lambda^{-1}n))$ approximate $(\lambda,k)$-\ftbfp~ for $n$ vertex digraphs is NP-hard.
\end{theorem}

\vspace{0.5mm}

We finally 
present
application of our \ftbfp ~construction in computing all-pairs
fault-tolerant $\lambda$-reachability oracle.


\begin{theorem}
Given any directed graph $\G=(V,E)$ on $n$ vertices and any positive constants $\lambda,k\geq 1$, we can preprocess $\G$ in polynomial time to build an $O(n^2)$ size data structure that, given any query vertex-pair $(x,y)$ and any set $F$ of $k$ edges, reports the $(x,y)$\hspace{0.1mm} $\lambda$-reachability in $\G\setminus F$ in 
$O(n^{1+o(1)})$ time.
\end{theorem}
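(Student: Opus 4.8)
The plan is to reduce all-pairs $\lambda$-bounded fault-tolerant reachability to $n$ single-source instances, each handled by the sparse $(\lambda,k)$-\ftbfp~of Theorem~1. Concretely, for every vertex $x\in V$ I would invoke that algorithm with source $x$ to obtain a $(\lambda,k)$-\ftbfp~$\H_x$ of $\G$; by Theorem~1 this costs $O(\lambda2^k mn)$ time and yields $|E(\H_x)|\le \lambda2^k n$. The data structure simply stores the collection $\{\H_x : x\in V\}$. The total preprocessing time is then $O(\lambda2^k mn^2)$, which is polynomial, and since $\lambda,k$ are constants each $\H_x$ has $O(n)$ edges, so the whole structure occupies $O(n^2)$ space. (By \autoref{theorem:main-lo-bound} the per-source sparsifier cannot be made asymptotically smaller, so $O(n^2)$ is the natural target.)

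Given a query $(x,y)$ together with a set $F\subseteq E$ of at most $k$ edges, I would answer it entirely inside $\H_x$. By \autoref{definition:flow-preserver}, for every $t\in V$ the quantity $\maxflow(x,t,\H_x\minus F)$ equals $\maxflow(x,t,\G\minus F)$ whenever the latter is at most $\lambda$, and is at least $\lambda$ otherwise; hence $\min\{\maxflow(x,y,\H_x\minus F),\,\lambda\}=\min\{\maxflow(x,y,\G\minus F),\,\lambda\}$, which is exactly the $\lambda$-reachability value of the pair $(x,y)$ in $\G\minus F$ (equivalently, it answers ``is the $(x,y)$ max-flow in $\G\minus F$ at least $\lambda$?''). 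So it suffices to evaluate the left-hand side.

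To do that, I would delete the at most $k$ edges of $F$ from $\H_x$ and run a truncated augmenting-path computation from $x$ to $y$ in the unit-capacity graph $\H_x\minus F$, stopping as soon as $\lambda$ edge-disjoint augmenting paths have been found (report value $\lambda$) or no further augmenting path exists (report the number found so far). Each BFS for an augmenting path costs $O(|E(\H_x)|)=O(\lambda2^k n)$, and at most $\lambda$ of them are performed, so a query is answered in $O(\lambda^2 2^k n)=O(n)$ time, comfortably within the claimed $O(n^{1+o(1)})$ bound and leaving ample slack to maintain the edge deletions or to invoke any off-the-shelf max-flow routine as a black box.

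The only genuine content here is Theorem~1: it is precisely the linear-in-$\lambda$ size bound that makes the per-source structures sparse enough for $O(n^2)$ total space and near-linear queries. Using instead the earlier $2^{k+\lambda}n$-edge bound of Lokshtanov et al.\ would make each $\H_x$ have $\Theta(n^2)$ edges, inflating space or query time to $\Theta(n^2)$, so there is no real obstacle beyond invoking Theorem~1; the remaining steps (one $\lambda$-capped max-flow per source) are routine. The one thing to verify carefully is that $\G$ is taken with unit edge-capacities, so that Menger's theorem identifies the bounded max-flow with the number of edge-disjoint $x$–$y$ paths and the truncated augmenting-path count is well defined; if $\lambda$-reachability is instead understood as the boolean predicate ``$\maxflow\ge\lambda$'', the same computation answers it verbatim.
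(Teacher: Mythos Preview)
Your proposal is correct and essentially matches the paper's approach: store a $(\lambda,k)$-\ftbfp~$\H_x$ for each source $x$ (total size $O(n^2)$ since $\lambda,k$ are constants) and answer a query $(x,y,F)$ by computing the $(x,y)$-max-flow in $\H_x\setminus F$. The only difference is the query subroutine: the paper invokes the almost-linear-time max-flow algorithm of Chen et~al.\ as a black box to obtain the $O(n^{1+o(1)})$ bound, whereas your more elementary $\lambda$-capped augmenting-path computation already gives $O(n)$ for constant $\lambda,k$.
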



%

\subsection{Existing Works}

For undirected graphs, there exists a tight construction for $(\lambda,k)$-\ftbfp~with ${O}((k+\lambda)\cdot n)$ edges that 
directly follows from $\alpha$-edge connectivity certificate constructions provided by Nagamochi and Ibaraki~\cite{NagamochiI:92}.

A closely related problem to that of graph preservers is  fault-tolerant reachability oracles.
For dual failures, the work of \cite{Choudhary16} obtained an $O(n)$ size single source reachability oracle with constant query time for directed graphs.
Brand and Saranurak~\cite{BrandS19}, showed construction of 
an $\widetilde O(n^2)$ sized 
 $k$-fault-tolerant 
all-pairs reachability oracle that has $O(k^\omega)$ query time.

Recently,
Baswana et al.~\cite{BaswanaBP22} considered the problem of 
sensitivity oracle for reporting max-flow value for a single source-destination pair. 
They presented an $O(n^2)$ size data-structure that
after failure of any two edges, reports the max-flow value of the surviving graph in constant
time. 

For the problem of computing the value of all-pairs max-flow up to $\lambda$ 
in the static setting, Abboud et at.~\cite{AbboudGIKPTUW19} 
obtained two deterministic algorithms that work for DAGs: a combinatorial algorithm which runs in ${O}(2^{{O}(\lambda^2)}\cdot mn)$ time, and another algorithm that can be faster on dense graphs which runs in ${O}((\lambda\log n)4^{\lambda+o(\lambda)}\cdot n^\omega)$ time.

Some other graph theoretic problems studied in the fault-tolerant model
 include computing distance preservers~\cite{DTCR08, PP13, Parter15},
 depth-first-search tree~\cite{BCCK19}, spanners~\cite{CLPR09, DK11},
 approximate single source distance preservers~\cite{BK13, PP14, BGLP16},
 approximate distance oracles~\cite{DP09, CLPR10}, compact routing
 schemes~\cite{CLPR10, Chechik13}.


\section{Preliminaries}
\label{section:prelim}

Given a digraph $\G=(V,E)$ on $n=|V|$ vertices and $m=|E|$ edges, 
we first define some notations used throughout the paper.

\begin{itemize}
\item $\In(v,\G)$:~ The set of in-neighbours of $v$ in $\G$.
\item $\Out(v,\G)$:~ The set of out-neighbours of $v$ in $\G$.
\item $\inedges(v,\G)$:~ The set of all incoming edges of $v$ in $\G$.
\item $\outedges(v,\G)$:~ The set of all outgoing edges of $v$ in $\G$.
\item $\Out(A,\G)$:~ The set of all those vertices in $V\backslash A$ having an incoming
edge from some vertex of $A$ in $\G$, where $A\subseteq V(\G)$.
\item $\G(A)$:~ The subgraph of $\G$ induced by the vertices lying in a subset $A$ of $V$.
\item $\G+(u,v)$:~ The graph obtained by adding an edge $(u,v)$ to graph $\G$.
\item $\G\minus F$:~ The graph obtained by deleting the edges lying in a set $F$ from graph $\G$.
\item $\maxflow(S,t,\G)$:~ The value of the maximum flow in graph $\G$ from a source set $S$ to a destination vertex $t$. When the set $S$ comprises of a single vertex, say $s$, we 
represent it simply by $\maxflow(s,t,\G)$.
\item $\textsc{path}[a,b,T]$:~ The path from node $a$ to $b$ in a tree $T$.
\item $P[a,b]$:~ The subpath of path $P$ lying between vertices $a$ and $b$, where $a$ precedes $b$ on $P$.
\item $P\circ Q$~:~ The path formed by concatenating paths $P$ and $Q$ in $\G$.
Here it is assumed that the last edge (or vertex) of $P$ is the same as the first edge (or vertex) of $Q$.
\end{itemize}

We next define the concept of farthest min-cut that was introduced by Ford and Fulkerson in their pioneering work on flows and cuts~\cite{FF62}.
%
Let $S$ be a source set, and $t$ be a destination vertex. Any $(S,t)$-cut $C$
is a partition of the vertex set into two sets: $A(C)$ and $B(C)$, where $S\subseteq A(C)$ and $t\in B(C)$. 
An $(S,t)$-min-cut $C^*$ is said to be the 
\emph{farthest min-cut} if $A(C^*)\supsetneq A(C)$ for every $(S,t)$-min-cut
$C$ other than $C^*$. We denote the cut $C^*$ by  $\fmc(S,t,\G)$.
%
Similar to farthest-min-cut, we can define the nearest min-cut. 
An $(S,t)$-min-cut $C^*$ is said to be the 
\emph{nearest min-cut} if $A(C^*)\subsetneq A(C)$ for every $(S,t)$-min-cut
$C$ other than $C^*$. We denote the cut $C^*$ by  $\nmc(S,t,\G)$.


Below we state a property of nearest and farthest $(s,t)$-min-cuts~\cite{FF62}.

\begin{property}
Let $s$ be a source vertex, $t$ be a destination vertex, and $f$ be an $s$ to $t$ max-flow in graph $\G$.
Let $\G_f$ denote the residual graph corresponding to flow $f$.
Further let $X$ be the set of vertices reachable from $s$ in $\G_f$, and $Y$ be the set of vertices having
a path to $t$ in $\G_f$.
Then $\nmc(s,t,\G)=(X,V\setminus X)$ and $\fmc(s,t,\G)=(V\setminus Y,Y)$.
\end{property}

%
%

\section{Hardness of logarithmic approximation}


We prove in this section the following hardness result for approximating optimal $\ftbfp$.	

\begin{theorem}\label{HardnessThm1}
For any $\lambda,k,n\geq 1$ satisfying $k=\Omega(\log (\lambda^{-1}n))$, 
the problem of computing an $O(\log (\lambda^{-1}n))$ approximate $(\lambda,k)$-\ftbfp~ for $n$ vertex digraphs is NP-hard.
\end{theorem}

We prove the above theorem by showing a reduction from the \setcover~ problem to the optimal \ftbfp.  

\begin{problem}[\cite{10.1007/978-3-642-32512-0_24}, Definition 1] 
\label{HardnessDef1}
The input to \setcover~ consists of base set $U$, $|U| = n$ and subsets $S_1,...,S_m \subseteq U$, $\cup_{j=1}^{m} S_j = U$, $m \leq poly(n)$. The goal is to find as few sets $S_{i_1},...,S_{i_k}$ as possible that cover $U$, that is, $\cup_{j = 1}^{k}S_{i_j} = U$
\end{problem}

\begin{lemma}[\cite{10.1007/978-3-642-32512-0_24}, Theorem 2] 
\label{HardnessLem1}
For every $0 < \alpha < 1$ (exact) SAT on inputs of size $n$ can be reduced in polynomial time to approximating \setcover~ to within $(1 - \alpha) \ln{N}$ on inputs of size $N = n^{O(1/\alpha)}$. 
\end{lemma}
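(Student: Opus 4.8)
The plan is to prove Lemma~\ref{HardnessLem1}, which is the standard PCP-based inapproximability result for \setcover~ due to Dinur and Steurer (the citation \cite{10.1007/978-3-642-32512-0_24}). Since this is a known result stated verbatim from the cited source, the honest approach is to present the proof as an invocation of the PCP machinery rather than to reprove the full PCP theorem from scratch. First I would recall the two-prover one-round (2P1R) characterization of \NP: by the PCP theorem together with parallel repetition, for every constant $\varepsilon>0$ and every language $L\in\NP$, there is a polynomial-time reduction from instances of $L$ (in particular from exact SAT) to Label-Cover instances whose soundness gap is $\varepsilon$, at the cost of a constant-factor (depending on $\varepsilon$) blow-up in the number of repetitions, hence a polynomial blow-up in instance size. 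The key quantitative point is to track how the parameter $\alpha$ in the statement controls the number of repetitions and therefore the exponent $O(1/\alpha)$ in the size bound $N=n^{O(1/\alpha)}$.

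The main steps, in order, are as follows. I would start from a Label-Cover instance obtained from the SAT instance of size $n$ via the PCP theorem and $r$-fold parallel repetition, so that a satisfiable SAT instance yields a Label-Cover instance with a perfect labeling, while an unsatisfiable one yields an instance no labeling of which satisfies more than a $\gamma^{r}$ fraction of constraints, for some absolute constant $\gamma<1$. The instance size after repetition is $n^{O(r)}$. Next I would apply the set-system gadget that converts Label-Cover into \setcover: each constraint is replaced by a partition system (a collection of set-pairs indexed by the possible labels), so that a good labeling yields a small cover and a situation with no good labeling forces any cover to be large. The heart of the argument is the gap analysis of this gadget: using a partition system on a universe of size roughly $2^{h}$ with $h$ partitions, one shows that the completeness-to-soundness ratio of the resulting \setcover~ instance is $(1-o(1))\ln N$, where $N$ is the size of the universe of the \setcover~ instance. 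Setting the parameters of the partition system and the repetition count $r$ as increasing functions of $1/\alpha$ converts the constant soundness gap $\gamma^{r}$ into the multiplicative inapproximability factor $(1-\alpha)\ln N$.

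The final step is to verify the two quantitative claims in the statement precisely: that the reduction runs in polynomial time, and that the output size satisfies $N=n^{O(1/\alpha)}$. Polynomial-time follows because each stage---the base PCP reduction, the $r$-fold repetition, and the gadget substitution---is polynomial for fixed $\alpha$; the exponent $O(1/\alpha)$ arises because both the repetition count and the partition-system dimension must scale like $\Theta(1/\alpha)$ to push the soundness below the $(1-\alpha)\ln N$ threshold, and each contributes a factor to the exponent of $n$. I would then conclude that a polynomial-time $(1-\alpha)\ln N$-approximation for \setcover~ would decide SAT, contradicting $P\neq NP$.

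The hard part will be the gap analysis of the partition-system gadget, specifically calibrating the universe size and the number of partitions so that the soundness of the \setcover~ instance matches the $(1-\alpha)\ln N$ factor \emph{exactly} as stated, rather than merely up to an unspecified constant. Getting the constant in front of $\ln N$ to be $(1-\alpha)$ (and not some weaker $c\cdot\ln N$) is precisely what distinguishes this sharp form of the result and is the technically delicate portion of \cite{10.1007/978-3-642-32512-0_24}; since the lemma is quoted from that source, I would cite their gadget construction and soundness lemma directly and restrict the proof here to assembling these pieces and tracking the size parameter $N=n^{O(1/\alpha)}$.
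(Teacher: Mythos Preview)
The paper does not prove this lemma at all: it is stated with a direct citation to Theorem~2 of \cite{10.1007/978-3-642-32512-0_24} and used as a black box. Your proposal, by contrast, sketches the actual Dinur--Steurer argument (PCP, parallel repetition to Label-Cover, partition-system gadget, parameter tracking). As a high-level outline of that proof your description is accurate, and you correctly identify that the delicate point is getting the constant $(1-\alpha)$ rather than just $\Theta(1)$ in front of $\ln N$. But relative to the paper this is a different---and far more ambitious---route: the paper treats the lemma as an imported result requiring no argument, whereas you are proposing to reproduce the cited theorem. For the purposes of this paper, a one-line citation is all that is expected; your sketch would be appropriate only if the goal were a self-contained exposition of the hardness result itself.
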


From Lemma \ref{HardnessLem1}, we can also deduce that it is NP-Complete to approximate \setcover ~ up to a multiplicative factor of $c_1\log{\max(n,m)}$ for some $c_1 > 0$ as $m \leq poly(n)$.\\

\noindent
\textbf{Transformation}~
Given a \setcover ~ instance $\langle U,\F \rangle$, we will construct a $(\lambda,k)$-\ftbfp ~
instance $\langle \G,s \rangle$. The transformation is as follows (also see Figure~\ref{fig:hardness}).

\begin{enumerate}
   
\item  Round up the number for elements in $U$ to nearest power of $2$ (let this be $2^{u}$) 
by adding $2^{u} - |U|$ new elements to $U$ and all these new elements to every set in $\F$.

\item Initialize $\G$ to be the graph with $N+1$ vertices, namely, $s,v_1,\ldots,v_N$
where $N=4\lambda(m+n)$.

\item Next construct the following subgraph $\G_i$, for each $i\in[1,\lambda]$.
\begin{enumerate}
\item
Construct a complete binary tree $B_i$ rooted at a vertex $r_i$ of height $u$ and $2^u$ leaf nodes.
The leaf nodes of $B_i$ will correspond to elements in the universe $U$.
From each leaf node $x_i$ in $B_i$, 
add out-edges to two new vertices, namely, $\ell(x_i)$ and $r(x_i)$.

\item For each set $W\in \F$, add a vertex $y_{i,W}$ to graph $\G_i$. 
Let $Y_i$ denote the resulting set which consists of $|\F|$ vertices.
For each $x\in U$ and $W\in \F$, add an edge from $\ell(x_i)$ to 
$y_{i,W}$ if and only if $x\in W$.

\item Add a set $Z_i$ of $u+1$ additional vertices.
For each leaf $x_i$ in $B_i$, add an edge from $r(x_i)$ to each vertex in the set $ Z_i$.
\end{enumerate}

\item Finally, we add an edge from $s$ to the roots $r_1,\ldots,r_\lambda$.
Also for each $i\in[1,\lambda]$, we add an edge from each vertex in $Y_i\cup Z_i$ 
to each of the vertices $v_1,\ldots,v_N$.
\end{enumerate}

We set $k=u + 1$ for this $(\lambda,k)$-\ftbfp ~instance.
\vspace{3mm}

\begin{figure}
\centering
\includegraphics[width=0.96\textwidth]{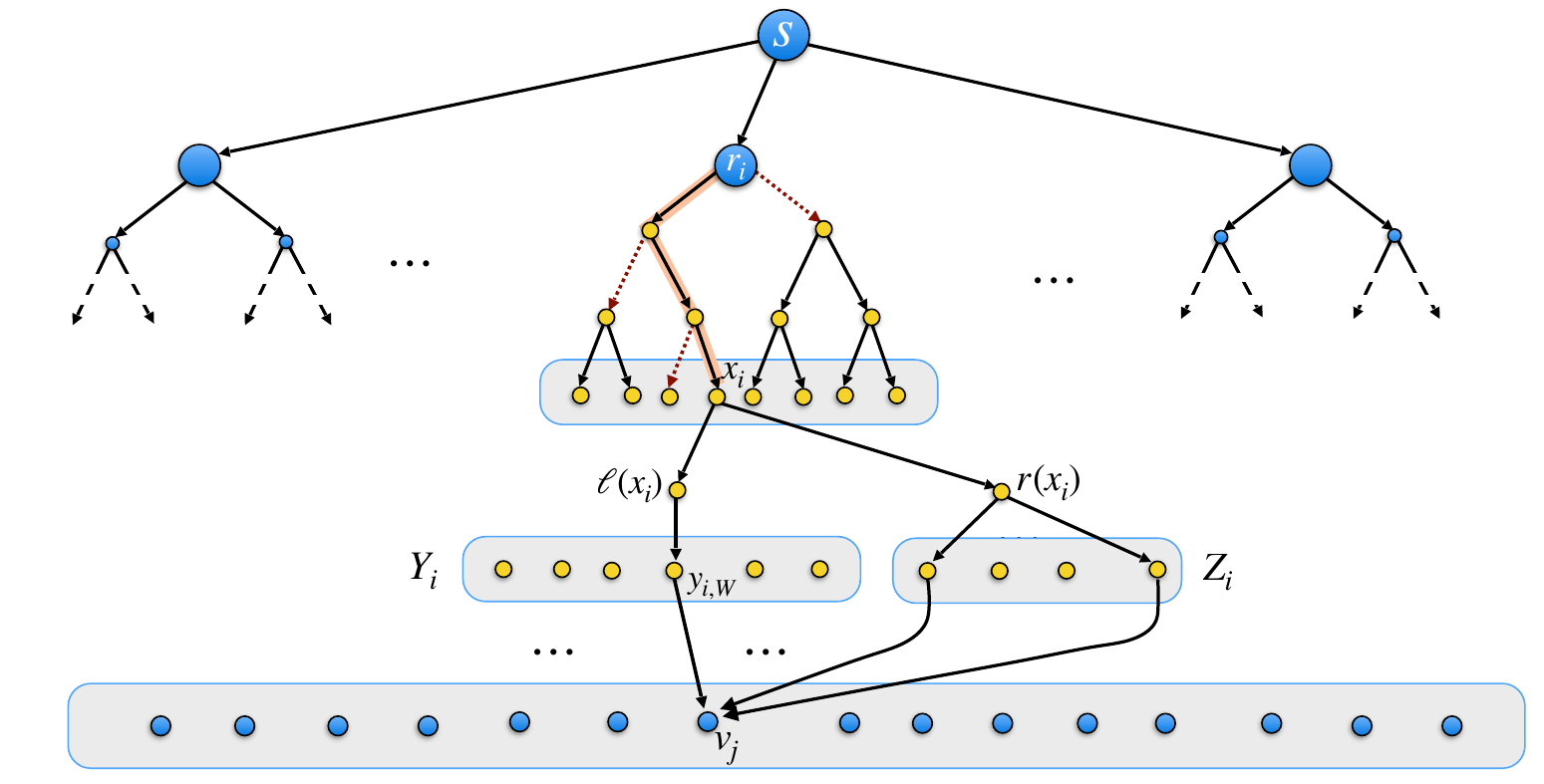}
\caption{Depiction of a $(\lambda,k)$-\ftbfp ~instance obtained from a \setcover ~instance $\langle U,\F \rangle$.}
\label{fig:hardness}
\end{figure}

%
%

\begin{lemma} \label{HardnessLem3} 
Any $(\lambda,k)$-\ftbfp ~$\H$ of the graph instance $\langle \G,s \rangle$, can be used to construct a solution of the \setcover ~instance of size at most $\lambda^{-1}(\min_{j=1}^N|\In(v_j,\H)|)$.
\end{lemma}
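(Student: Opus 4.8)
The idea is to look at a vertex $v_j$ that minimizes $|\In(v_j,\H)|$, and decode a set cover from the incoming edges of $v_j$ in $\H$. First I would observe that in the original graph $\G$, the flow from $s$ to $v_j$ is at least $\lambda$: there are $\lambda$ internally vertex-disjoint structures $\G_1,\dots,\G_\lambda$, and each $\G_i$ provides a path $s\to r_i\to\cdots$ through its binary tree down to some leaf, then through $r(x_i)$ into $Z_i$, then to $v_j$; so $\maxflow(s,v_j,\G)\ge\lambda$ (in fact much larger, which is why $N$ is taken large). Now I would pick the fault set $F$ wisely: delete all $u+1=k$ edges leaving a single fixed $r(x_i)$-type vertex — no wait; the right choice is to delete, inside each $\G_i$, enough edges to kill the ``$Z_i$ side'' and force the flow to go through the $Y_i$ (set) vertices. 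Concretely, in each copy $\G_i$ the $r(x_i)$ leaves route into $Z_i$ which has only $u+1=k$ vertices, so deleting the $k$ edges from the $Z_i$-vertices to $v_j$ (or equivalently isolating $Z_i$) is one natural candidate; but since $|F|\le k$ total across the whole graph, I can only afford to attack one copy at a time. The cleaner approach: argue that for the flow to $v_j$ to still be $\ge\lambda$ after deleting the $k$ edges out of the $Z_i$ bottleneck in copy $i$, copy $i$ must still carry one unit of flow, and that unit is now forced through $\ell(x_i)\to y_{i,W}\to v_j$ for sets $W$ actually present in $\H$.

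The key combinatorial step is then a covering argument. Fix $i$. The subgraph $\H$ restricted to copy $i$, after we remove the $k$ edges that form the $Z_i$ bottleneck to $v_j$, must still admit an $s$–$v_j$ path; tracing it, it must go $s\to r_i\to$ (down the tree in $\H$) $\to$ some leaf $x_i\to \ell(x_i)\to y_{i,W}\to v_j$ with $(x_i,W)$ and $(y_{i,W},v_j)$ edges in $\H$ and $x\in W$. But we get to choose which leaf is ``hit'' — by deleting the $k$ tree edges along the root-to-leaf path to any chosen leaf $x_i$ instead (again $k=u+1$ edges suffice to disconnect one leaf's subtree... actually the height is $u$, so the root-to-leaf path has $u$ edges, and we have one edge to spare). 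Here is where I would be careful: the budget $k=u+1$ must simultaneously (a) be spent to force flow through the set-layer and (b) be independent of which copy $i$ and which element $x$ we target, so that a single $(\lambda,k)$-FTBFP $\H$ is forced to contain, for \emph{every} element $x\in U$ and \emph{every} copy $i$, an edge $\ell(x_i)\to y_{i,W}$ with $x\in W$ present in $\H$. Summing over $x$, the sets $W$ such that $y_{i,W}$ has an edge to $v_j$ in $\H$ form a set cover of $U$; and the number of such $W$ is at most the number of $Y_i$-vertices among $\In(v_j,\H)$, hence at most $|\In(v_j,\H)|$. Doing this in the single copy $i$ achieving the min, or averaging over the $\lambda$ copies, gives a cover of size $\le \lambda^{-1}|\In(v_j,\H)| = \lambda^{-1}\min_{j}|\In(v_j,\H)|$.

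So the steps, in order, are: (1) show $\maxflow(s,v_j,\G)\ge\lambda$ and indeed $\ge N\gg\lambda$ for all $j$, using disjointness of the $\G_i$'s and the large out-degree into $\{v_1,\dots,v_N\}$; (2) for a target element $x\in U$ and copy $i$, exhibit the fault set $F$ of size $\le k=u+1$ (the $u$ tree-edges on the root-to-$x_i$ path plus one more to kill the $r(x_i)\to Z_i$ escape, or symmetrically the $Z_i$-to-$v_j$ bottleneck) that forces any surviving $\ge\lambda$ flow to route copy $i$'s unit through a set-vertex $y_{i,W}$ with $x\in W$ and with edges $\ell(x_i)\to y_{i,W}\to v_j$ in $\H$; (3) conclude that $\{W : (y_{i,W},v_j)\in E_\H\}$ covers $U$; (4) count: this family has size at most the number of $Y_i$-vertices feeding $v_j$ in $\H$, summed/averaged over $i\in[\lambda]$ it is $\le\lambda^{-1}|\In(v_j,\H)|$, and take $v_j$ minimizing $|\In(v_j,\H)|$. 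The main obstacle is step (2): pinning down a single fault set of size exactly $k=u+1$ that works uniformly and genuinely forces the flow off the $Z_i$-side and onto exactly one targeted leaf's set-edges, while checking that the $\ge\lambda$ guarantee of $\H$ (not equality, since the true flow may exceed $\lambda$) still yields the needed edge — this is the delicate part, and the reason the construction pads $U$ to a power of two and sizes $Z_i$ at exactly $u+1$.
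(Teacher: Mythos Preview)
Your plan is the paper's plan: pick $v_j$ minimizing $|\In(v_j,\H)|$, set $S_i=\{W:(y_{i,W},v_j)\in E(\H)\}$, take the smallest $S_{i_0}$ (size $\le\lambda^{-1}|\In(v_j,\H)|$ by averaging), and prove it covers $U$ by exhibiting, for each $x\in U$, a fault set of size $k$ that forces an $s$--$v_j$ path in $\H\setminus F$ through $\ell(x_{i_0})$ and hence through some $y_{i_0,W}$ with $x\in W$.

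Your worry about step~(2) is well placed, and your described fault set is not quite right. Deleting the $u$ edges \emph{on} the root-to-$x_{i_0}$ path would disconnect $x_{i_0}$ from $r_{i_0}$ rather than single it out; the ``$r(x_{i_0})\to Z_{i_0}$'' side consists of $u+1$ edges, not one; and the ``$Z_{i_0}\to v_j$'' bottleneck alone already eats all $k$ edges while not targeting any particular element. The correct $F$ (and this is what the paper uses) is: (i)~the $u$ \emph{off-path} edges in $B_{i_0}$, namely for each internal node on the root-to-$x_{i_0}$ path the edge to its child \emph{not} on the path, which makes $x_{i_0}$ the unique leaf of $B_{i_0}$ reachable from $r_{i_0}$; together with (ii)~the single edge $(x_{i_0},r(x_{i_0}))$. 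That is exactly $u+1=k$ edges. After their removal $\maxflow(s,v_j,\G\setminus F)=\lambda$ exactly (every $s$--$v_j$ path uses one of the $\lambda$ edges $(s,r_i)$, and each copy still carries a unit), so your concern about ``$\ge\lambda$ not equality'' evaporates: $\H\setminus F$ must carry $\lambda$ edge-disjoint $s$--$v_j$ paths, one through each $r_i$, and the one through $r_{i_0}$ is now forced through $\ell(x_{i_0})$.
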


\begin{proof}
Consider a vertex $v_j$ in $\H$ that minimizes $|\In(v_j,\H)|$.
%
Consider the following candidate solutions
$$S_i = \{ W\in\F  ~|~  (y_{i,W},v_j) \in E(\H) \}.$$ 
Out of the $\lambda$ sets, namely $S_1,\ldots,S_\lambda$, let $S_{i_0}$ be the set with least cardinality. 
The cardinality of $S_{i_0}$ is at most $|\In(v_j,\H)|/\lambda$ as minimum value is upper-bounded by the average value. 

Now in order to prove that $S_{i_0}$
is a valid solution, consider an element $x\in U$.
Let $P$ be the unique path from $r_{i_0}$ to leaf node $x_{i_0}$ in $B_{i_0}$, 
and let $F_1$ be the set of all those edges $(u,v)\in B_{i_0}$
such that $u\in P$ and $v$ is the child of $u$ not lying on $P$. 
Observe that $x_{i_0}$ is the unique leaf in $B_{i_0}$ that is reachable from $s$ in $\H\setminus F_1$.
Let $F_2$ be a singleton set comprising of the edge $(x_{i_0},r(x_{i_0}))$.
Consider the set $F=F_1\cup F_2$ of size $k$.
Since $\maxflow(s,v_j,\G\minus F)=\lambda$,
there must exists a path, say $Q$, from $s$ to $v_j$ in $\H\setminus F$ passing through $r_{i_0}$.
Such a path $Q$ must pass through $\ell(x_{i_0})$ as well as a vertex in $Y_{i_0}$, say $y_{i_0,W}$. 
This implies that
the edge $(y_{i_0,W},v_j)$ lies in $\H$, 
and so by definition of $S_{i_0}$, the set $W$ lies~in~$S_{i_0}$.
Moreover $W$ contains the element $x$ as $(\ell(x_{i_0}),y_{i_0,W})$ is an edge in $\G$.
This proves that element $x\in U$ is covered by $S_{i_0}$, and thus
$S_{i_0}$ is a valid solution to $\langle U,\F \rangle$.  
\end{proof}

\begin{lemma} \label{HardnessLem4}
Any solution $S$ of the \setcover~ instance $\langle U,\F \rangle$, can be used to construct a solution 
$\H$ of 
$(\lambda,k)$-\ftbfp ~instance 
satisfying $|\In(v_j,\H)| = \lambda(|S| + k)$,
for each $j\in[1,N]$.
\end{lemma}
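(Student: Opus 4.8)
The plan is to take $\H$ to be $\G$ with precisely the edges $(y_{i,W},v_j)$, for every $W\in\F\minus S$, $i\in[1,\lambda]$ and $j\in[1,N]$, deleted, and to keep every other edge of $\G$. For each gadget $\G_i$ we then retain all $u+1$ edges from $Z_i$ to $v_j$ together with exactly the $|S|$ edges $(y_{i,W},v_j)$ with $W\in S$; since $k=u+1$ this immediately yields $|\In(v_j,\H)|=\lambda(|S|+k)$ for every $j$. The real task is to verify that this $\H$ is a valid $(\lambda,k)$-\ftbfp. As $\H\subseteq\G$ we always have $\maxflow(s,t,\H\minus F)\le\maxflow(s,t,\G\minus F)$, so it suffices to prove the matching lower bound, capped at $\lambda$, for every $F$ with $|F|\le k$ and every $t\in V$.

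For targets $t\notin\{v_1,\dots,v_N\}$ this is immediate: every deleted edge enters one of the sinks $v_j$, so no $s$-$t$ walk in $\G$ can use a deleted edge; hence any $s$-$t$ flow of $\G\minus F$ already lies in $\H\minus F$ and the two values agree exactly. So fix a target $v_j$. The gadgets $\G_1,\dots,\G_\lambda$ are pairwise vertex-disjoint apart from $s$ and $v_1,\dots,v_N$, and each $\G_i$ can route at most one unit of flow into $v_j$ because $(s,r_i)$ is a bottleneck; consequently $\maxflow(s,v_j,\G\minus F)=|I_F|$ where $I_F:=\{\,i : \G_i\text{ still carries an }s\text{-}v_j\text{ path in }\G\minus F\,\}$, a quantity that is always at most $\lambda$. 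Thus everything reduces to the following local claim: if $|F|\le k$ and $\G_i$ carries an $s$-$v_j$ path in $\G\minus F$, then $\H$ also carries one through $\G_i$ in $\H\minus F$. Granting the claim, one selects such a path for each $i\in I_F$; these are internally vertex-disjoint, so $\maxflow(s,v_j,\H\minus F)\ge|I_F|=\maxflow(s,v_j,\G\minus F)$, exactly what the definition of $\ftbfp$ asks for.

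I would prove the local claim by contradiction: assume $\G_i$ carries a surviving $s$-$v_j$ path $P$ but the copy of $\G_i$ inside $\H$ carries none, and show $|F|>k$. The prefix of $P$ forces $(s,r_i)\notin F$ and reaches a leaf $x$ of $B_i$ with the tree-path $r_i\leadsto x$ and the edge $(x,\ell(x))$ all avoiding $F$; moreover, since every route through $Z_i$ and every route through some $y_{i,W'}$ with $W'\in S$ survives verbatim in $\H$, the deadness of $\H$ inside $\G_i$ forces $P$ to leave $x$ along $\ell(x)\to y_{i,W}\to v_j$ with $W\in\F\minus S$. Let $L$ be the (nonempty) set of leaves of $B_i$ reachable from $r_i$ after deleting the tree-edges in $F$, put $\ell=|L|$, and let $R$ be the corresponding reachable subtree. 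One then extracts three pairwise-disjoint families of edges of $F$: (i) for every leaf $x'\in L$ the edge $(x',r(x'))$ lies in $F$ --- otherwise blocking all $u+1$ routes $x'\to r(x')\to z\to v_j$ ($z\in Z_i$) already costs $u+1=k$ edges, pinning $F$ exactly to edges of the form $(r(x'),z)$ or $(z,v_j)$, after which $F$ contains no tree-edge and no edge $(\ell(\cdot),y_{i,W'})$ or $(y_{i,W'},v_j)$, so a set-cover route through any leaf survives in $\H\minus F$, a contradiction; this gives $\ell$ distinct edges; (ii) because $S$ covers the element of $x$, at least one edge of $\{(\ell(x),y_{i,W'}),(y_{i,W'},v_j):W'\in S,\ x\in W'\}$ lies in $F$, and since such a $W'$ differs from $W$ this edge is new; (iii) all edges of $B_i$ leaving $R$ lie in $F$, and a short tree computation (writing $p$ for the number of dead-end nodes of $R$: along any root-to-leaf path of $R$ the number of one-child nodes is at least $u-(\ell+p-1)$, and the number of edges leaving $R$ equals $2p$ plus the number of one-child nodes) shows this family has size at least $u+1-\ell$, and it is trivially nonnegative, so the bound survives the case of large $\ell$ too. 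Summing the three families, $|F|\ge\ell+1+(u+1-\ell)=u+2>k$, a contradiction.

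The crux is step (iii): ruling out that the adversary spends tree-edge failures cleverly so as to keep some leaf reachable while still killing every one of its surviving routes to $v_j$. The clean encapsulation is the elementary fact that in a pruned complete binary tree of height $u$ with $\ell\ge1$ surviving leaves, (surviving leaves) $+$ (cut edges) $\ge u+1$, which follows by a one-line induction on $u$. With that in hand, together with the bookkeeping in (i)--(ii), the argument closes; the remaining verifications (the in-degree count, the $t\notin\{v_j\}$ case, and the internal disjointness of the gadget paths) are routine.
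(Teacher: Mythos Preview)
Your construction of $\H$ and the reduction to the per-gadget statement (``if $\G_i$ carries an $s$--$v_j$ path in $\G\setminus F$ then it does so in $\H\setminus F$'') coincide with the paper's. The paper then simply exhibits $\min(k+1,|R|)$ vertex-disjoint $R$-to-$v_j$ paths in $\H$ --- one through $\ell(x^0)$ and a covering set $W\in S$, the rest through distinct vertices of $Z_i$ --- and asserts that one survives $k$ faults. That assertion is immediate only when $|R|\ge k+1$; for $|R|\le k$ it tacitly relies on the tree inequality you spell out in~(iii). Your argument is thus the same construction with a more explicit combinatorial accounting: the extra ingredient is step~(i), forcing $(x',r(x'))\in F$ for \emph{every} reachable leaf $x'$ (else the $u{+}1$ routes through $Z_i$ already absorb all of $F$ and leave a set-cover route alive), which together with~(ii) and the tree bound in~(iii) pushes the count to $u+2>k$ uniformly in $|R|$. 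The proof is correct, and what it buys over the paper's version is a fully justified treatment of the small-$|R|$ regime.
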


\begin{proof}
Let $S$ be a solution of the \setcover~ instance $\langle U,\F \rangle$.
Consider the sets
$$A_i = \{y_{i,W}~|~ W \in S\} \cup Z_i,\text{~for~}i\leq\lambda, \text{~~~and~~~}  A = \bigcup_{i=1}^{\lambda} A_i.$$

We will show that $$\H=\G\setminus \cup_{j=1}^N\inedges(v_j) + \cup_{j=1}^N(A\times v_j).$$ 

is a $(\lambda,k)$-\ftbfp ~of $\G$.\\ 

Let us assume, to the contrary, that $\H$ is not a $(\lambda,k)$-\ftbfp ~of $\G$.
Then there must exist an edge set $F$ of size at most $k$ and an index $j\in[1,N]$ satisfying
$\maxflow(s,v_j,\G\setminus F)$ is greater than $\maxflow(s,v_j,\H\setminus F)$.
Observe that each path from $s$ to $v_j$ must pass through a vertex $r_i$, for some 
$i\in[1,\lambda]$, and each $r_i$ only allows a unit flow to pass through~it.

Since $\maxflow(s,v_j,\G\setminus F) > \maxflow(s,v_j,\H\setminus F)$, there must exist an 
index $i\in[1,\lambda]$ satisfying that there exists a path from $s$ to $v_j$ in $\G \setminus F$ 
{\em passing through $r_i$},
but no such corresponding path exists in $\H \setminus F$. 

Let $R=\{x^0_i,x^1_i,\ldots,x^\alpha_i\}$ be the set of 
leaf nodes in tree $B_i$ {\em reachable} from $s$ in $\G\minus F$.
There exist at least $\min(k+1,|R|)$ vertex-disjoint paths 
from $R$ to $v_j$ in $\H$, namely,
\begin{itemize}
\item $(\{x^0_i,\ell(x^0_i),y_{i,W},v_j)$,
where $W\in\F$ is the set in $S$ that contains the element $x^{0}\in U$.

\item $(\{x^c_i,r(x^c_i),z^c_i,v_j)$, for $c=1$ to $\min(k,|R|-1)$.
\end{itemize}

Thus even after $k$ faults atleast one path from $r_i$ to $v_j$ will exist in $\H \setminus F$. 
This contradicts the assumption that there is no $s$ to $v_j$ path 
in $\G\setminus F$ passing through $r_i$.
Hence, $\maxflow(s,v_j,\G\setminus F)$ must be identical to $\maxflow(s,v_j,\H\setminus F)$.
\end{proof}


The proof of Theorem \ref{HardnessThm1} now directly follows from
Lemma~\ref{HardnessLem1}, Lemma~\ref{HardnessLem3}, and Lemma~\ref{HardnessLem4}, 
along with the fact that for every integer $n\geq 1$, there exist hard instances of the 
\setcover ~problem $(U,\F)$ satisfying $|U| = n$, where the size of the optimal solution is
 significantly larger than $\log |U|$.

\section{Upper bound of $\lambda 2^kn$ Edges}

In this section we will provide construction of a sparse $(\lambda,k)$-\ftbfp.

\subsection{Locality Property for Flow Preservers}

\begin{lemma}
Let $\G=(V,E)$ be a graph with a source $s\in V$, $\lambda\geq 1$ be an integer, and $v$~be~a vertex in $V$. Let $\alpha=\min\big(\lambda,\maxflow(s,v,\G)\big)$.
Let $\E_v$ be the set of in-edges of $v$ corresponding to any arbitrary set of $\alpha$-edge-disjoint paths from $s$ to $v$ in $\G$. Further, let $\H$ be a subgraph of $\G$ obtained by 
restricting the in-edges of $v$ to those present in $\E_v$. 
Then, for any vertex $t\in V$,
$$\maxflow(s,t,\H)~\geq~ \min\big(\lambda,\maxflow(s,t,\G)\big).$$
\label{lemma:flow-property}
\end{lemma}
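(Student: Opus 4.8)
The plan is to fix a target vertex $t$ and a family of $\beta := \min(\lambda, \maxflow(s,t,\G))$ edge-disjoint paths $Q_1,\dots,Q_\beta$ from $s$ to $t$ in $\G$, and to show that these paths can be rerouted inside $\H$ so that $\beta$ edge-disjoint $s$-$t$ paths survive. The only obstruction is that some $Q_j$ may enter $v$ along an in-edge that was deleted (i.e.\ an in-edge of $v$ not in $\E_v$); every other edge of every $Q_j$ is still present in $\H$, since $\H$ differs from $\G$ only in the in-edges of $v$. So the whole difficulty is concentrated at the single vertex $v$: I must splice the ``broken'' path prefixes onto surviving in-edges of $v$ without creating edge conflicts.

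First I would set up the flow picture. Let $J \subseteq \{1,\dots,\beta\}$ be the indices $j$ such that $Q_j$ passes through $v$, and say $Q_j = Q_j[s,v]\circ Q_j[v,t]$ for $j\in J$. Because the $Q_j$'s are edge-disjoint, the prefixes $\{Q_j[s,v] : j\in J\}$ form $|J|$ edge-disjoint $s$-$v$ paths in $\G$, so $|J| \le \maxflow(s,v,\G)$; combined with $|J|\le\beta\le\lambda$ we get $|J| \le \alpha = \min(\lambda,\maxflow(s,v,\G))$. Now $\E_v$ is the in-edge set of some $\alpha$ edge-disjoint $s$-$v$ paths $P_1,\dots,P_\alpha$ in $\G$, and crucially all edges of all $P_i$ other than their final in-edge of $v$ need not lie in $\H$ — wait, that is the subtle point, so let me instead argue at the level of flows: in $\H$, the max-flow from $s$ to $v$ is at least $\alpha$, because $\H$ retains the $\alpha$ in-edges of $v$ from $\E_v$ and the paths $P_1,\dots,P_\alpha$ themselves use only those in-edges at $v$ and arbitrary edges elsewhere — and edges not incident to $v$ as in-edges are untouched, while the in-edges used are exactly $\E_v$. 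Hence $\maxflow(s,v,\H)\ge\alpha\ge|J|$.

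The key step is then a rerouting/flow-augmentation argument. Consider the edge set $\bigcup_{j\notin J} Q_j \;\cup\; \bigcup_{j\in J} Q_j[v,t]$: this is a set of $\beta$ edge-disjoint ``partial'' routes, of which $\beta - |J|$ already reach $t$ from $s$, and $|J|$ of them are suffixes from $v$ to $t$, all living in $\H$ and all edge-disjoint from each other. Since $\maxflow(s,v,\H)\ge|J|$, there exist $|J|$ edge-disjoint $s$-$v$ paths in $\H$; I want to combine them with the $|J|$ suffixes. The cleanest way is to phrase it as a single flow problem in $\H$: add a super-sink that receives one unit from $t$ for each of the $\beta-|J|$ complete routes and one unit from $v$ for each suffix — more simply, observe that a flow of value $|J|$ from $s$ to $v$ plus a flow of value $|J|$ from $v$ to $t$ (carried by the suffixes) plus the flow of value $\beta-|J|$ from $s$ to $t$ (carried by the complete routes, which avoid $v$) compose, by flow decomposition at $v$, into a flow of value $\beta$ from $s$ to $t$ in $\H$; edge-disjointness of the suffixes and of the complete routes, together with the fact that the complete routes avoid $v$, ensures the pieces can be combined without exceeding unit capacities, after cancelling any cycles. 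This yields $\maxflow(s,t,\H)\ge\beta=\min(\lambda,\maxflow(s,t,\G))$, as required.

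The main obstacle I anticipate is the composition step at $v$: one must be careful that the $|J|$ new $s$-$v$ paths in $\H$ may share edges with the complete routes $Q_j$ ($j\notin J$) or with the suffixes $Q_j[v,t]$, so a naive concatenation need not be edge-disjoint. The right fix is not to reason with explicit paths but with unit-capacity flows and Menger/integral-flow: define a unit-capacity flow network on $\H$, push $|J|$ units $s\to v$, then reinterpret the $|J|$ suffixes plus the $\beta-|J|$ complete routes as a feasible flow of value $\beta-|J|$ from $s$ to $t$ plus $|J|$ units entering $v$ that must exit toward $t$; adding these flows and cancelling cycles gives an integral $s$-$t$ flow of value $\beta$ in $\H$, and this is exactly $\maxflow(s,t,\H)\ge\beta$. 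I would write this union-of-flows argument carefully, as it is where all the edge-disjointness bookkeeping actually lives.
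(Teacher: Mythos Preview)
Your rerouting plan is natural, but the flow-combination step has a real gap. Summing the $|J|$-unit $s$--$v$ flow with the flow carried by the complete routes and suffixes gives a pseudo-flow that satisfies conservation everywhere (including at $v$) but may put two units on a single edge, and ``cancelling cycles'' does not fix this. Concretely: on vertices $s,a,v,t$ with edges $(s,a),(a,t),(a,v),(v,t)$, one complete route $s\to a\to t$, one suffix $v\to t$, and the unique $s$--$v$ path $s\to a\to v$ sum to two units on $(s,a)$ with no cycle to cancel, yet $\maxflow(s,t)=1$ here. This particular $\H$ does not actually arise from the lemma's hypotheses (one checks that it would force $\alpha\ge 2$), but your argument as written never invokes a property that excludes it. In particular you never use that when $\alpha<\lambda$ one has $\alpha=\maxflow(s,v,\G)$ --- a bound in $\G$, not merely $\maxflow(s,v,\H)\ge|J|$ --- and without that global constraint the implication you want is simply false.

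The paper's proof is structurally different: it takes an $(s,t)$-min-cut $(A,B)$ in $\H$ and splits on $v\in A$ versus $v\in B$. If $v\in A$, the cut has the same value in $\G$ (only in-edges of $v$ were removed) and one is done; if $v\in B$, the cut is also an $(s,v)$-cut, so its value is at least $\alpha$, which finishes the case $\alpha=\lambda$. The remaining case $\alpha=\maxflow(s,v,\G)$ is handled by contradiction via an augmenting path in the residual graph $\G_f$ of a max $(s,t)$-flow $f$ in $\H$: such a path must use a deleted in-edge $(w,v)$, and by taking the $\beta+1$ edge-disjoint $(s,t)$-paths in $\H+(w,v)$ routed once across the cut $(A,B)$ and splicing their $A$-side prefixes with the $B$-side tails of the $\alpha$ retained $s$--$v$ paths, one manufactures $\alpha+1$ edge-disjoint $s$--$v$ paths in $\G$, a contradiction. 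This last step is exactly where the bound $\alpha=\maxflow(s,v,\G)$ does the work that your argument is missing.
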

\vspace{-6mm}

\begin{proof}
        By construction of $H$, $\alpha = \maxflow(s,v,H)$.
        Denote $\beta = \maxflow(s,t,H)$.
        Let $(A,B)$ be an $(s,t)$-min-cut in $H$.
           If $v \in A$ then, by construction of $H$,
           the $(s,t)$-cut $(A,B)$ has value $\beta$ also in $G$,
           so $\beta \geq \maxflow(s,t,G)$ and we are done.
           Assume next $v \in B$.
           Then $(A,B)$ is an $(s,v)$-cut of value $\beta$ in $H$.
           By construction of $H$, $\alpha = \maxflow(s,v,H)$, so $\beta \geq
       \alpha$.    If $\alpha = \lambda$ we are done, so assume $\alpha =
       \maxflow(s,v,G)$.
      
           We now show that $\beta \geq \maxflow(s,t,G)$, which ends the
       proof.
           Suppose not, and let $f$ be an $(s,t)$-max-flow in $H$.
           Then the residual graph $G_f$ must have an augmenting path $P$,
           containing some edges present in $G$ but not in $H$.
           Such edges are all incoming to $v$.
           Thus, $P = P[s,w] \circ (w,v) \circ P[w,t]$ where $(w,v) \in E(G)
       \setminus E(H)$,
           and $P[s,w], P[v,t]$ are present in the residual graph $H_f$.
           Adding $P$ to $f$ gives an $(s,t)$-flow of in $H + (w,v)$,
          implying that
          \begin{enumerate}
            \item $\maxflow(s,t, H+(w,v)) = \beta + 1$
            \item $(w,v) \in A \times B$
            \item $(A,B)$ is an $(s,t)$-min-cut in $H + (w,v)$
          \end{enumerate}
      
           Let $\{ Q_i \circ e_i \circ Q'_i \}_{i=1}^{\alpha}$ be $\alpha$
       edge-disjoint $s$-to-$v$ paths in $H$,
           where the edge $e_i$ of each such path is its last edge crossing the $(s,v)$-cut $(A, B)$,
           so $V(Q'_i) \subseteq B$.
           Such exist as $\alpha = \maxflow(s,v,H)$.
           Let $e_{\alpha+1}, \dots, e_{\beta}$ be the other edges crossing
       $(A,B)$ in $H$.
           Let $e_0 = (w,v)$, crossing $(A,B)$ by (ii).
           Let $\{ P_j \circ e_j \circ P'_j\}_{j=0}^{\beta}$ be $\beta+1$
       edge-disjoint $s$-to-$t$ paths in $H+(w,v)$,
           each crossing the cut $(A, B)$ exactly once, at $e_j$, so $V(P_j)
       \subseteq A$.
           Such exist by (i) and (iii).
           Then, $\{P_0 \circ e_0\} \cup \{P_i \circ e_i \circ Q'_i \}_{
       i=1}^{\alpha}$
          are $\alpha +1$ edge-disjoint $s$-to-$v$ paths in $G$,     contradicting $\alpha = \maxflow(s,v,G)$.
       \end{proof}

In the next lemma we show that in order to compute a sparse $(\lambda,k)$-\ftbfp ~it suffices to focus on a single destination node.

\begin{lemma}[Locality Lemma for Flow Preservers]
Let $\A$ be an algorithm that given any graph $\G$ and any vertex $v\in V(\G)$, 
computes a $(\lambda,k)$-\ftbfp ~of $\G$ with at most $c_{\lambda,k}$ in-edges to $v$.
Then using $\A$, one can construct for any $n$ vertex digraph a $(\lambda,k)$-\ftbfp ~with at most $c_{\lambda,k}\cdot n$ edges.
\label{lemma:locality-lemma}
\end{lemma}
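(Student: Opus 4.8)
The plan is to iterate the single-destination algorithm $\A$ over all $n$ vertices and intersect the resulting in-edge sets, then argue that the intersection still preserves flow. Concretely: for each vertex $v\in V(\G)$, run $\A$ on $\G$ with destination $v$ to obtain a subgraph $\H_v$ in which $v$ has at most $c_{\lambda,k}$ in-edges; let $\E_v\subseteq\inedges(v,\G)$ be that surviving in-edge set. Now define $\H=(V,E_\H)$ where $E_\H=\bigcup_{v\in V}\E_v$, i.e.\ we keep, for \emph{every} vertex $v$ simultaneously, only the in-edges selected for it. Since $|\E_v|\le c_{\lambda,k}$ for each $v$, we have $|E_\H|\le c_{\lambda,k}\cdot n$, which is the claimed size bound. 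It remains to prove that $\H$ is a $(\lambda,k)$-\ftbfp ~of $\G$.

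For correctness, fix a fault set $F\subseteq E$ with $|F|\le k$ and a target $t\in V$; I must show $\maxflow(s,t,\H-F)\ge\min(\lambda,\maxflow(s,t,\G-F))$ (the reverse inequality is automatic since $\H\subseteq\G$, and equality below $\lambda$ follows). The idea is to peel off the vertices one at a time. Let $v_1,\dots,v_n$ be an enumeration of $V$, and define a chain of graphs $\G=\G_0\supseteq\G_1\supseteq\cdots\supseteq\G_n=\H$, where $\G_i$ is obtained from $\G_{i-1}$ by restricting the in-edges of $v_i$ to $\E_{v_i}$. The key point is that $\E_{v_i}$ is a valid ``$\alpha$-edge-disjoint-paths'' in-edge set for $v_i$ \emph{in $\G_{i-1}$ as well}, not merely in $\G$: because $\A$'s output $\H_{v_i}$ is a $(\lambda,k)$-\ftbfp, in particular $\maxflow(s,v_i,\H_{v_i})\ge\min(\lambda,\maxflow(s,v_i,\G))$, so $\E_{v_i}$ carries $\min(\lambda,\maxflow(s,v_i,\G))$ edge-disjoint $s$-$v_i$ paths, and restricting to a subgraph only decreases this quantity — so $\E_{v_i}$ still hosts $\min(\lambda,\maxflow(s,v_i,\G_{i-1}))$ edge-disjoint $s$-$v_i$ paths in $\G_{i-1}$. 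Hence Lemma~\ref{lemma:flow-property} applies at each step with $\G\leftarrow\G_{i-1}$, $\H\leftarrow\G_i$, $v\leftarrow v_i$, giving $\maxflow(s,t,\G_i)\ge\min(\lambda,\maxflow(s,t,\G_{i-1}))$ for every $t$.

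Finally, I need the fault-tolerant version. Apply the above telescoping to the fixed fault set: since deleting $F$ commutes with restricting in-edges (the operations touch disjoint parts of the edge set, or one can just note $(\G_{i-1}-F)$ and $(\G_i-F)$ are related by the same in-edge restriction at $v_i$), and since $\E_{v_i}$ restricted to $\G_{i-1}-F$ still supports the appropriate number of edge-disjoint paths — here I use that $\H_{v_i}$ being a $(\lambda,k)$-\ftbfp ~guarantees $\maxflow(s,v_i,\H_{v_i}-F)\ge\min(\lambda,\maxflow(s,v_i,\G-F))$ for \emph{this} $F$ — Lemma~\ref{lemma:flow-property} again yields $\maxflow(s,t,\G_i-F)\ge\min(\lambda,\maxflow(s,t,\G_{i-1}-F))$. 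Chaining from $i=1$ to $n$ gives $\maxflow(s,t,\H-F)\ge\min(\lambda,\maxflow(s,t,\G-F))$, as required. The main obstacle is precisely this last point: verifying that the in-edge set $\E_{v_i}$ chosen by $\A$ remains a legitimate input to Lemma~\ref{lemma:flow-property} after both the earlier vertex-restrictions \emph{and} the deletion of $F$ — i.e.\ that the ``$\alpha$ edge-disjoint paths'' hypothesis of Lemma~\ref{lemma:flow-property} survives in $\G_{i-1}-F$ — which is why we need $\A$ to output a genuine fault-tolerant preserver per vertex rather than merely a flow-preserver in the fault-free graph.
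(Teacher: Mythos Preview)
Your construction differs from the paper's in one crucial respect, and that difference breaks the proof. You run $\A$ on the \emph{original} graph $\G$ for every vertex $v$ and then intersect the in-edge sets, whereas the paper runs $\A$ on the \emph{already-restricted} graph $\G_{i-1}$ at step $i$. Your justification for applying Lemma~\ref{lemma:flow-property} at step $i$ is the sentence ``$\E_{v_i}$ carries $\min(\lambda,\maxflow(s,v_i,\G))$ edge-disjoint $s$-$v_i$ paths, and restricting to a subgraph only decreases this quantity.'' But Lemma~\ref{lemma:flow-property} applied with ambient graph $\G_{i-1}\setminus F$ requires $\alpha'=\min(\lambda,\maxflow(s,v_i,\G_{i-1}\setminus F))$ edge-disjoint $s$--$v_i$ paths \emph{inside $\G_{i-1}\setminus F$} whose last edges lie in $\E_{v_i}$. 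You only know such paths exist in $\H_{v_i}\setminus F\subseteq \G\setminus F$; those paths may traverse in-edges of $v_1,\dots,v_{i-1}$ that were deleted in forming $\G_{i-1}$, so they need not survive. The inequality $\alpha'\le\beta$ is irrelevant here: you need paths in the smaller graph, not merely fewer paths in the larger one.

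In fact your construction (not just the argument) is incorrect. Take $\lambda=2$, $k=0$, and $\G$ on vertices $s,a,v_1,v_2$ with edges $(s,a),(s,v_1),(s,v_2),(a,v_1),(a,v_2),(v_1,v_2),(v_2,v_1)$. Then $\maxflow(s,v_1,\G)=\maxflow(s,v_2,\G)=3$. An adversarial $\A$ may return $\E_{v_1}=\{(a,v_1),(v_2,v_1)\}$ and $\E_{v_2}=\{(a,v_2),(v_1,v_2)\}$; each of $\G\setminus(s,v_1)$ and $\G\setminus(s,v_2)$ is individually a valid $(2,0)$-\ftbfp. But your $\H=\G\setminus\{(s,v_1),(s,v_2)\}$ has $(s,a)$ as the unique out-edge of $s$, so $\maxflow(s,v_1,\H)=1<2$. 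The paper avoids this precisely by calling $\A$ on $\G_{i-1}$: then the $\alpha'$ witnessing paths are guaranteed to lie in $\H_{i-1}\setminus F\subseteq \G_{i-1}\setminus F$, which is exactly the hypothesis Lemma~\ref{lemma:flow-property} needs.
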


\begin{proof}
Consider a graph $\G$ with $n$ vertices, namely, $v_1,\ldots,v_n$.
We will provide a construction of  $(\lambda,k)$-\ftbfp ~of $\G$ using black-box access to algorithm~$\A$. 
We compute a sequence of graphs 
$\G_0,\G_1,\ldots,\G_n$ as follows:
\begin{enumerate}
\item Initialize $\G_0=\G$.
\item For $i\geq 1$, compute $\G_i$ in two steps:
\begin{enumerate}
\item First use $\A$ to compute a $(\lambda,k)$-\ftbfp ~of $\G_{i-1}$ in which the in-degree of $v_i$ is bounded by $c_{\lambda,k}$, let this graph be~$\H_{i-1}$.
\item Obtain $\G_i$ from $\G_{i-1}$ by restricting the incoming edges of $v_i$ to those present in~$\H_{i-1}$.
\end{enumerate}
\end{enumerate}

It is easy to verify that the in-degree of each vertex in $\G_n$ is at most $c_{\lambda,k}$. 

To show that $\G_n$ is a $(\lambda,k)$-\ftbfp ~of $\G$, it suffices to show that $\G_i$ is a $(\lambda,k)$-\ftbfp ~of $\G_{i-1}$, for each $i\geq 1$.

Let us fix an index $i$ in the range $[1,n]$.
Consider a set $F$ of at most $k$ edges in $\G_{i-1}$, and
let $$\alpha=\min\big(\lambda,\maxflow(s,v_i,\G_{i-1}\minus F)\big).$$
By construction, $\H_{i-1}$ is a $(\lambda,k)$-\ftbfp ~of $\G_{i-1}$, so there exists at least $\alpha$ edge-disjoint paths from $s$ to $v_i$ in the graph $\H_{i-1}\minus F$.
Let $\E_i$ be the set of in-edges of $v_i$ corresponding to these~$\alpha$ edge-disjoint paths. 
Observe that the edges in $\E_i$ lie in graph $\G_{i}\minus F$.
Moreover, graphs $\G_i\minus F$ and $\G_{i-1}\minus F$ differ only at in-edges of $v_i$.
Therefore by~Lemma~\ref{lemma:flow-property} it follows that for any vertex $t\in V(G)$, 
$\maxflow(s,t,\G_i\minus F)\geq \min\big(\lambda,\maxflow(s,t,\G_{i-1} \minus F)\big)$. 
This proves that $\G_i$ is a $(\lambda,k)$-\ftbfp ~of $\G_{i-1}$.
\end{proof}

\subsection{Construction of an Improved FTRS}

We present here an improved bound on the in-degree of a node $t$ in $k$-\ftrs ~when the node $t$ satisfies that $(s,t)$-max-flow in $\G$ is larger than one.
In particular, we prove the following theorem.

\begin{theorem}
Let $\G$ be an $n$ vertex, $m$ edges directed graph with a designated source node~$s$.
Let $t$ be a vertex satisfying $\maxflow(s,t,\G)= f$, for some positive integer $f$.
Then for every $k\geq 1$, we can compute in $O(2^kfm)$ time a $(k+f-1)$-\ftrs ~for $\G$ in which the in-degree of node $t$ is at most $2^{k} f$.
\label{theorem:flow-FTRS}
\end{theorem}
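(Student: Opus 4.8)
\textbf{Proof plan for Theorem~\ref{theorem:flow-FTRS}.}
The plan is to mimic the divide-and-conquer strategy behind the classical $2^k n$-edge $k$-FTRS construction of Baswana et al., but to run it only for the single destination $t$ (thanks to Lemma~\ref{lemma:locality-lemma} this suffices) and to replace the ``single $s$-to-$t$ path'' base object by a \emph{maximum flow} of value $f$ from $s$ to $t$. First I would fix a max-flow $f_0$ from $s$ to $t$ in $\G$, which decomposes into $f$ edge-disjoint $s$-to-$t$ paths $P_1,\dots,P_f$; let $\E_0$ be the set of their last edges (the $f$ in-edges of $t$ that they use). The recursion will maintain, at ``level $j$'' (number of faults already accounted for), a collection of at most $2^j f$ surviving in-edges of $t$, together with the invariant that for every fault set $F$ with $|F|\le k$ the surviving subgraph still carries a max-flow to $t$ of value $\min\big(f,\maxflow(s,t,\G\minus F)\big)$ after accounting for the edges of $F$ that were ``charged'' so far. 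Concretely, I would process faults one at a time: given the current sparsified graph $\H_j$ (with in-degree of $t$ at most $2^j f$), for \emph{each} single edge $e$ that could be the first fault, recompute a max-flow from $s$ to $t$ in $\H_j\minus e$, record its set of last in-edges, and take the union over all $e$ of these in-edge sets together with $\E_{j}$; a counting argument shows this union has size at most $2^{j+1}f$, giving $\H_{j+1}$. After $k$ rounds the in-degree of $t$ is at most $2^k f$.

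The correctness argument is where the real work lies. I would prove, by induction on $|F|$, that for every $F$ with $|F|\le k$ the graph $\H_k\minus F$ satisfies $\maxflow(s,t,\H_k\minus F)\ge \min\big(f,\maxflow(s,t,\G\minus F)\big)$; since $\maxflow(s,t,\G\minus F)\ge f-|F|\ge f-k+1 \ge 1$ whenever the original flow was at least $f$, this is exactly what a $(k+f-1)$-FTRS must guarantee for $t$ (if the original max-flow to $t$ is $f$, then after at most $k+f-1$ deletions reachability is preserved precisely when some residual flow of value $\ge 1$ survives, i.e.\ when $\maxflow(s,t,\G\minus F)\ge 1$, and our flow-level bookkeeping captures this). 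For the inductive step, I would follow the standard FTRS potential/augmenting-path argument: take $F=\{e\}\cup F'$; if $e$ is not currently an in-edge of $t$ that we kept, then the flow we built in the ``for each first fault'' step already avoids $e$ and the claim reduces to the $|F'|$ case inside the corresponding recursively-sparsified graph; if $e$ \emph{is} such an in-edge, removing it drops the flow by at most one, and the union construction guarantees a replacement in-edge so that a flow of the right value routes around $e$, again reducing to a smaller instance. The key structural fact making the ``reduce to a smaller instance'' step legitimate is Lemma~\ref{lemma:flow-property}: restricting $t$'s in-edges to those carrying a chosen flow of value $\alpha=\min(\lambda,\maxflow)$ never decreases any $\min(\lambda,\cdot)$-bounded flow to $t$, which lets me commute ``sparsify at $t$'' with ``delete a fault'' the same way Lemma~\ref{lemma:locality-lemma} does globally.

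For the running time, each of the $\le k+1$ levels performs at most $m$ max-flow computations (one per candidate first fault), each on a graph where the relevant flow value is $O(f)$, hence $O(fm)$ time per max-flow by Ford--Fulkerson-style augmentation, for a total of $O(k f m \cdot 2^{?})$; to get the clean $O(2^k f m)$ bound I would instead organize the recursion as a tree of depth $k$ with branching controlled so that the total number of max-flow calls is $O(2^k)$ rather than $O(m^k)$ — this is exactly the trick in Baswana et al., where one does not branch on all $m$ edges but only on the $O(2^j f)$ edges currently incident to $t$ (the only edges whose deletion can matter for $t$'s in-degree), so level $j$ costs $O(2^j f \cdot f m)$ and the geometric sum is $O(2^k f^2 m)$; absorbing the extra $f$ into the flow-value factor, or being slightly more careful about which augmenting paths are recomputed, yields $O(2^k f m)$.

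\textbf{Main obstacle.} The delicate point is the bookkeeping that ties ``number of faults absorbed'' to ``flow value that must survive,'' i.e.\ proving the invariant that $\H_k\minus F$ preserves $\min(f,\maxflow(s,t,\G\minus F))$ and not merely reachability; I expect the induction to need a careful choice of \emph{which} max-flow (a farthest-min-cut--based canonical flow, via the Property on nearest/farthest min-cuts) is recorded at each node so that the augmenting-path surgery in the inductive step stays inside the kept edge set. Making that surgery work with an $f$-fold flow rather than a single path — in particular ensuring the $\alpha+1$ edge-disjoint paths derived by splicing $P_j$'s with $Q'_i$'s (as in the proof of Lemma~\ref{lemma:flow-property}) remain edge-disjoint after the deletion — is the crux.
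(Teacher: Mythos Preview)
Your proposal has a genuine gap in both the size bound and the correctness argument, and it diverges completely from how the paper proves the theorem.

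The core problem is the ``branch only on the $O(2^j f)$ in-edges of $t$'' step. You justify this by saying these are ``the only edges whose deletion can matter for $t$'s in-degree,'' but that is false for correctness: a fault on an interior edge (not incident to $t$) can kill every path in your current flow and force $t$ to be reached through a completely different set of in-edges. Your inductive step makes this explicit when you write that if $e$ is not an in-edge of $t$ that you kept, ``the flow we built \dots\ already avoids $e$.'' It does not --- the $f$ edge-disjoint paths you recorded pass through many interior edges, and $e$ may well be one of them. So the case split does not close, and the induction does not go through. Branching on \emph{all} $m$ edges would repair correctness but destroys the $2^k f$ size and $O(2^k f m)$ time bounds you are aiming for; your own running-time paragraph already shows you oscillating between these two incompatible choices.

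The paper's proof is entirely different and does not involve any fault-branching recursion. It first transforms $\G$ into an auxiliary graph $\H$ in which every vertex other than $s$ has out-degree at most $2$ (each vertex $y$ is replaced by $|\In(y,\G)|$ binary trees over its out-neighbours), and shows that a $(k+f-1)$-\ftrs\ for $\H$ pulls back to one for $\G$ with the same in-degree at $t$. On $\H$ it then runs the Baswana--Choudhary--Roditty algorithm verbatim: iterate $k$ times, each time computing the farthest $(S_i,t)$-min-cut $C_i$ and growing the source set to $S_{i+1}=(A_i\cup\Out(A_i,\G))\setminus\{t\}$, and finally keep only the in-edges of $t$ used by a max-flow from $S_{k+1}$. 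The in-degree bound $2^k f$ comes from $|C_1|=f$ together with the structural lemma $|C_{i+1}|\le 2|C_i|$, which holds precisely because of the out-degree-$2$ transformation. Correctness as a $(k+f-1)$-\ftrs\ (rather than merely a $k$-\ftrs) follows because the auxiliary graphs $\H_i=\G+(s,v_1)+\cdots+(s,v_{i-1})$ from the BCR analysis now satisfy $\maxflow(s,t,\H_{k+1})=k+f$ (starting from $f$ instead of $1$), so after restricting in-edges and deleting any $k+f-1$ edges at least one unit of flow survives. The $2^k$ here is a min-cut doubling phenomenon tied to the out-degree-$2$ reduction, not the size of a recursion tree --- that reduction is the missing idea in your plan.
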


Let us focus on a single destination node $t$.
We first show that it suffices to provide construction of $(k+f-1)$-\ftrs ~for a
graph in which out-degree of each vertex other than $s$ is bounded by $2$.
In order to prove this we will transform the graph $\G=(V,E)$ into another graph $\H=(V_H,E_H)$ satisfying
that (i) the value of $(s,t)$-max-flow in graphs $\G$ and $\H$ is identical;
(ii) the out-degree of every vertex in $\H$ other than $s$ is bounded by two.
The steps to transform $\G$ into graph $\H$ are as follows:
\begin{enumerate}
\item Initialize $\H$ to be the graph $\G$. 
\item Split each edge $e=(x,y)\in E$ by inserting two new vertices $\ell_{x,y}$
and $r_{x,y}$ between the endpoints $x$ and $y$, so that edge $(x,y)$ is translated into the path
$(x,\ell_{x,y},r_{x,y},y)$.
\item For every node $y\in V\minus\{s,t\}$ if 
$x_1,\ldots,x_p$ are in-neighbours of $y$ in $\G$ and
$z_1,\ldots,z_q$ are out-neighbours of $y$ in $\G$, then we replace vertex $y$ (in current $\H$) by $p$ binary trees as follows.
First we remove node $y$ from $\H$. Next for each $x_i\in \In(y,\G)$
insert a binary tree $B_{x_i,y}$ to $\H$ (along with new internal nodes and edges) whose root is $r_{x_i,y}$ 
and leaves are $\ell_{y,z_1},\ldots,\ell_{y,z_q}$.
\end{enumerate}

Notice that $\H$ has $O(mn)$ edges and vertices. 
Indeed for every vertex $v$ (other than $s$ and $t$) in $\G$, $|\In(v,\G)|$ binary trees have been added to $\H$, 
each of size $O(|\Out(v,G)|)$. So the number of edges and vertices in the transformed graph is
$O(\sum_{v\in V} |\In(v,G)|\cdot |\Out(v,\G)|))=O(mn)$. Also, observe that 
the out-degree of each vertex in $\H$ other than $s$ bounded by two.

\begin{lemma}
$\maxflow(s,t,\G)=\maxflow(s,t,\H)$
\label{lemma:identical-max-flow}
\end{lemma}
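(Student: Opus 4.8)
The plan is to show that the graph transformation from $\G$ to $\H$ preserves $(s,t)$-max-flow by establishing a bijective-like correspondence between flows in the two graphs. Since max-flow equals the value of a minimum cut (or equivalently the maximum number of edge-disjoint $s$-to-$t$ paths, by Menger's theorem applied to the appropriate capacitated setting), I would argue both inequalities $\maxflow(s,t,\H)\geq\maxflow(s,t,\G)$ and $\maxflow(s,t,\H)\leq\maxflow(s,t,\G)$ separately by transferring collections of edge-disjoint paths.

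For the direction $\maxflow(s,t,\H)\geq\maxflow(s,t,\G)$: take a maximum set of $f$ edge-disjoint $s$-to-$t$ paths $P_1,\dots,P_f$ in $\G$. I would walk each $P_j$ through the transformation step by step. Each original edge $(x,y)$ on $P_j$ becomes the path $(x,\ell_{x,y},r_{x,y},y)$, and each intermediate vertex $y\notin\{s,t\}$ traversed via in-edge $(x_i,y)$ and out-edge $(y,z_\ell)$ now gets routed through the tree $B_{x_i,y}$ from its root $r_{x_i,y}$ down to leaf $\ell_{y,z_\ell}$. The key point is that because the original paths are edge-disjoint in $\G$, at any vertex $y$ no two of them can use the same incoming edge $(x_i,y)$, so they enter distinct trees $B_{x_i,y}$; hence they cannot collide inside any single tree, and each tree is used by at most one path along a single root-to-leaf walk. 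The newly introduced split-vertices $\ell_{x,y},r_{x,y}$ and the degree-2 tree edges are likewise traversed by at most one path. This yields $f$ edge-disjoint $s$-to-$t$ paths in $\H$.

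For the direction $\maxflow(s,t,\H)\leq\maxflow(s,t,\G)$: take $f'$ edge-disjoint $s$-to-$t$ paths in $\H$ and project them back to $\G$ by contracting each subdivided path $(x,\ell_{x,y},r_{x,y},y)$ back to $(x,y)$ and contracting each tree $B_{x_i,y}$ back to the single vertex $y$. A path in $\H$ enters a tree $B_{x_i,y}$ only through its root $r_{x_i,y}$, which came from subdividing the unique edge $(x_i,y)$, so the projected walk in $\G$ uses edge $(x_i,y)$ into $y$; it leaves through some leaf $\ell_{y,z_\ell}$, using edge $(y,z_\ell)$ out of $y$. I would check the projected walks are edge-disjoint in $\G$: if two $\H$-paths projected to walks sharing an edge $(x,y)$ of $\G$, they would both traverse the subdivision path containing $r_{x,y}$ (or $\ell_{x,y}$), contradicting edge-disjointness in $\H$. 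Finally, an $s$-to-$t$ walk in $\G$ can be shortened to a simple path without decreasing the count, so $f'\le f$. Alternatively, and perhaps more cleanly, one can argue this direction via min-cuts: an $(s,t)$-cut in $\G$ of capacity $f$ can be converted into an $(s,t)$-cut in $\H$ of the same capacity by cutting, for each original edge $(x,y)$ in the cut, the single edge $(\ell_{x,y},r_{x,y})$, which shows $\maxflow(s,t,\H)\le f$.

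The main obstacle I anticipate is handling the bookkeeping at a vertex $y$ with many in- and out-neighbours: one must be careful that replacing $y$ by $|\In(y,\G)|$ separate trees (rather than one) does not artificially increase flow — intuitively it cannot, because all these trees still share the same leaf set $\{\ell_{y,z_1},\dots,\ell_{y,z_q}\}$ and those leaves feed into the subdivided out-edges $(\ell_{y,z_\ell},r_{y,z_\ell})$, which act as capacity-$1$ bottlenecks limiting the out-flow from $y$ to exactly $|\Out(y,\G)|$, matching $\G$. Making this precise is exactly what the min-cut argument above accomplishes, so I would lead with the path-transfer argument for the lower bound on $\maxflow(s,t,\H)$ and the min-cut argument for the upper bound, since together they avoid the most delicate casework.
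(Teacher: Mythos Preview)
Your proposal is correct and takes essentially the same path-correspondence approach as the paper, though you are considerably more careful: the paper's proof only exhibits the forward map sending an $s$-to-$t$ path in $\G$ to one in $\H$ and then simply asserts that the max-flows coincide, without explicitly arguing edge-disjointness or the reverse inequality. Your min-cut argument for the upper bound is a clean addition not present in the paper.
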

\begin{proof}
We will show that each $s$ to $t$ path in $\G$ now corresponds to a unique $s$ to $t$ path in $\H$. 
Suppose there exists a path $(s=u_0,u_1,u_2,\ldots,u_k=t)$ in $\G$. Then we will have an equivalent path in $\H$ as
\begin{align*}
(s,\ell_{u_0,u_1},r_{u_0,u_1})\circ 
& \path(r_{u_0,u_1},\ell_{u_1,u_2},B_{u_0,u_1})\circ (\ell_{u_1,u_2},r_{u_1,u_2})\circ 
\ell \dots \circ \\
&\path(r_{u_{k-2},u_{k-1}},\ell_{u_{k-1},u_k},B_{u_{k-1},u_k})\circ \ell_{u_{k-1},u_k},r_{u_{k-1},u_k})\circ 
(r_{u_{k-1},u_k},t)
\end{align*}
where $\path(r,\ell,B)$ denotes the path from $r$ to $\ell$ using edges in binary tree $B$. Therefore, the $(s,t)$-max-flow values in graphs $\G$ and $\H$ are identical.
\end{proof}

We will now justify the significance of our transformation by providing a way to construct a $(k+f-1)$-\ftrs ~of $\G$ if we know a $(k+f-1)$-\ftrs ~for $\H$ such that the in-degree of $t$ in both the 
\ftrs s is identical.

\begin{lemma}
A $(k+f-1)$-\ftrs ~for $\G$ can be constructed by knowing a $(k+f-1)$-\ftrs ~of $\H$, that preserves the in-degree of node $t$.
\label{lemma:transformation_justification}
\end{lemma}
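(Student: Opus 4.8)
Given a $(k+f-1)$-\ftrs~$\H'$ of $\H$, the plan is to pull it back to a subgraph $\G'$ of $\G$ along the edge correspondence underlying Lemma~\ref{lemma:identical-max-flow}. For an original edge $(x,y)\in E$, call $b_{x,y}:=(\ell_{x,y},r_{x,y})$ its \emph{bridge} in $\H$; the bridges are precisely the edges of $\H$ representing edges of $\G$, and are in bijection with $E$. I would define $\G'$ by two rules: for $(x,y)\in E$ with $y\neq t$, keep $(x,y)$ in $\G'$ iff $b_{x,y}\in E(\H')$; for $(x,t)\in E$, keep $(x,t)$ in $\G'$ iff $(r_{x,t},t)\in E(\H')$. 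Since $t$ is not replaced during the construction of $\H$, the in-edges of $t$ in $\H$ are exactly $\{(r_{x,t},t):(x,t)\in E\}$, so the second rule makes $|\In(t,\G')|=|\In(t,\H')|$ by construction, and clearly $\G'\subseteq\G$.

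To handle faults, translate a set $F\subseteq E$ with $|F|\le k+f-1$ into $\phi(F):=\{b_{x,y}:(x,y)\in F\}\subseteq E(\H)$; this map is injective, so $|\phi(F)|=|F|\le k+f-1$. The first half of the argument is a \emph{lift}: if $v$ is reachable from $s$ in $\G\minus F$ along a (simple) path $P=(s=u_0,u_1,\dots,u_\ell=v)$, then lifting $P$ edge-by-edge exactly as in the proof of Lemma~\ref{lemma:identical-max-flow} yields a walk $\widehat P$ in $\H$ from $s$ to a representative $\widehat v$ of $v$, where $\widehat v=t$ if $v=t$, $\widehat v=r_{u_{\ell-1},v}$ if $v\notin\{s,t\}$, and the case $v=s$ is trivial. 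The bridge edges occurring on $\widehat P$ are exactly $b_{u_0,u_1},\dots,b_{u_{\ell-1},u_\ell}$; every other edge of $\widehat P$ is a tree edge or an edge incident to $s$ or $t$, hence not a bridge and not in $\phi(F)$, while $P$ avoiding $F$ keeps all the bridges $b_{u_i,u_{i+1}}$ out of $\phi(F)$ as well. So $\widehat v$ is reachable from $s$ in $\H\minus\phi(F)$, and since $\H'$ is a $(k+f-1)$-\ftrs~of $\H$ and $|\phi(F)|\le k+f-1$, the vertex $\widehat v$ stays reachable from $s$ in $\H'\minus\phi(F)$.

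The second half is the \emph{descent}. Here I would use that every path of $\H$ starting at $s$ has the canonical alternating shape from the proof of Lemma~\ref{lemma:identical-max-flow} --- a bridge, then a root-to-leaf path through one binary tree, then a bridge, and so on, with a step $(r_{\cdot,t},t)$ inserted whenever the path runs through $t$ --- which holds because $s$ is unreplaced, each $\ell_{x,y}$ has $b_{x,y}$ as its only out-edge, each $r_{x,y}$ is a tree root or feeds straight into its endpoint, and tree edges point away from roots. Reading a path from $s$ to $\widehat v$ in $\H'\minus\phi(F)$ off as a walk $(s=u'_0,u'_1,\dots,u'_{\ell'}=v)$ in $\G$, every edge $(u'_i,u'_{i+1})$ of the walk has a corresponding edge on the path --- its bridge $b_{u'_i,u'_{i+1}}$, together with the edge $(r_{u'_i,t},t)$ when $u'_{i+1}=t$ --- which therefore lies in $E(\H')$; by the two defining rules of $\G'$ this gives $(u'_i,u'_{i+1})\in E(\G')$, and since the path avoids $\phi(F)$ and $b_{u'_i,u'_{i+1}}$ always sits on it, $(u'_i,u'_{i+1})\notin F$. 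Hence $v$ is reachable from $s$ in $\G'\minus F$; the reverse implication is free from $\G'\subseteq\G$, so $\G'$ is a $(k+f-1)$-\ftrs~of $\G$ with $|\In(t,\G')|=|\In(t,\H')|$. I expect the only delicate point to be bookkeeping rather than mathematics: the vertex sets of $\G$ and $\H$ do not coincide --- a vertex $v\notin\{s,t\}$ is represented in $\H$ by the whole family $\{r_{x,v}\}_{x\in\In(v,\G)}$ of binary-tree roots, not by a single vertex --- so both the lift and the descent have to be carried out through these representatives, and it is exactly the separate rule for the in-edges of $t$ that makes the in-degree at $t$ preserved exactly rather than merely bounded.
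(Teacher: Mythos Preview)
Your proof is correct, but the paper takes a considerably simpler route. The paper's construction of $\G^*$ modifies \emph{only} the in-edges of $t$: it keeps $(w,t)$ iff $(r_{w,t},t)\in\H^*$, and retains \emph{every other edge of $\G$} unchanged. With this choice, the only nontrivial case in verifying the FTRS property is reachability of $t$ itself (reachability of any $v\neq t$ either avoids $t$, hence uses only edges that are all present in $\G^*$, or factors through $t$ and reduces to the previous case). For $t$, one lifts a path in $\G\setminus F$ to $\H\setminus F_0$, invokes the FTRS property of $\H^*$ to get a path in $\H^*\setminus F_0$, and descends; the descent is automatic because every non-$t$-in-edge of $\G$ is already in $\G^*$. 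In contrast, your rule ``keep $(x,y)$ iff $b_{x,y}\in E(\H')$'' for $y\neq t$ may prune additional edges, which forces you to argue in the descent that each intermediate bridge lies on the $\H'$-path and hence survives in $\G'$. This is a genuine extra step, and it is exactly what the paper's construction sidesteps.

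What your approach buys is generality you do not need here: if $\H'$ were an arbitrary FTRS of $\H$ that happened to drop some bridges, your $\G'$ would track that sparsification, whereas the paper's $\G^*$ ignores it. But since the only quantity the lemma cares about is the in-degree of $t$, and both constructions give $|\In(t,\G')|=|\In(t,\H')|$, the paper's coarser construction suffices and leads to a one-line proof for vertices other than $t$. Your lift-and-descent argument is sound (the canonical-shape observation for paths in $\H$ is correct, and the bookkeeping with representatives $r_{u_{\ell-1},v}$ works because that vertex has $b_{u_{\ell-1},v}$ as its unique in-edge), but it is doing more work than the lemma requires.
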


\begin{proof}
Let $\H^*$ be a $(k+f-1)$-\ftrs ~of $\H$. We want to construct $\G^*$, a $(k+f-1)$-\ftrs ~for $\G$ satisfying the condition 
that in-degree of $t$ in graphs $\G^*$ and $\H^*$ is identical. 

The construction of $\G^*$ is as follows:
For each in-neighbour $w$ of the vertex $t$ in $\G$, include edge $(w,t)$ in $\G^*$ if and only if edge $(r_{w,t},t)$ is present in $\H^*$.
Thus, the in-degree of $t$ in graphs $\G^*$ and $H^*$ is identical.
For vertices $v$ other than $t$, we include all in-neighbours of $v$ in $\G^*$. 

We will now prove that $\G^*$ is a $(k+f-1)$-\ftrs ~of $\G$.
Consider any set $F$ of at most $k$ failed edges in $\G$.
Define a set $F_0$ of failed edges in $\H$ by including edge $(\ell_{u,v},r_{u,v})$ in $F_0$ for every $(u,v)\in F$.
From the path correspondence above and the fact that
$\H^*$ is a $(k+f-1)$-\ftrs ~of $\H$,
 it is evident that
for any $r\leq \lambda$, there are $r$-edge-disjoint paths from $s$ to $t$ in $\G^*\minus F$ if and only if there are $r$-edge-disjoint paths from $s$ to $t$ in $\H^*\minus F_0$. 
Therefore, $\G^*$ is a $(k+f-1)$-\ftrs ~of~$\G$.
\end{proof}

It was shown in \cite{BaswanaCR:16} that if out-degree of $s$ is one,
and out-degree of all other vertices is bounded by two, then Algorithm~\ref{algo:kFTRS} computes a $k$-\ftrs ~for
$\G$ in which in-degree of $t$ is at most $2^k$.
We will prove in the next lemma that if $\maxflow(s,t,\G)=f$, and out-degree of every vertex other than $s$ is bounded by two,
then Algorithm~\ref{algo:kFTRS} in fact computes a $(k+f-1)$-\ftrs ~for
$\G$ in which the in-degree of $t$ is at most $2^k f$.

\begin{lemma}
Let $\G$ be a directed graph satisfying that the out-degree of every vertex other than the designated source $s$ is bounded by $2$, and $k\geq 1$ be an integer parameter.
Let $t\in V(\G)$ satisfy $\maxflow(s,t,\G)=f$, for some positive integer $f$.
Then Algorithm~\ref{algo:kFTRS} computes a $(k+f-1)$-\ftrs ~for $\G$ in which the in-degree of node $t$
is at most $2^{k} f$.
\end{lemma}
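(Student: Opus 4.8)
The plan is to adapt the inductive argument from \cite{BaswanaCR:16} for standard $k$-\ftrs, tracking the extra flow budget $f$. Recall that in the base case $f=1$ their analysis shows Algorithm~\ref{algo:kFTRS} produces a $k$-\ftrs ~with in-degree of $t$ at most $2^k$; here I want to show that starting from $\maxflow(s,t,\G)=f$ boosts both the fault-tolerance from $k$ to $k+f-1$ and the in-degree bound from $2^k$ to $2^k f$. The core structural fact to leverage is the farthest-min-cut machinery reviewed in the Preliminaries: the algorithm, when building the \ftrs, repeatedly computes farthest min-cuts in residual graphs after pushing flow, and the in-edges retained for $t$ correspond to edges crossing these cuts. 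Since $\maxflow(s,t,\G)=f$, the relevant cut has $f$ crossing edges rather than $1$, which is exactly where the factor $f$ will enter.

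First I would set up the induction on $k$. For $k=1$ (so fault-tolerance $f$), I would argue directly: after computing an $(s,t)$-max-flow $f$, the farthest min-cut $\fmc(s,t,\G)$ has exactly $f$ edges into the ``$t$-side''; the algorithm's handling of a single potential fault on each of the (at most $2$) out-edges per vertex, combined with the out-degree-$\le 2$ hypothesis, yields at most $2f$ retained in-edges of $t$, and the retained subgraph survives any single edge failure while preserving reachability up to flow $f$ (i.e.\ it is an $f$-\ftrs, which is $(k+f-1)$-\ftrs ~for $k=1$). For the inductive step, I would peel off one fault: Algorithm~\ref{algo:kFTRS} (in the Baswana--Choudhary--Roditty style) recurses by guessing the ``first'' failed edge near the cut, and for each of the constantly-many (at most $2$, by bounded out-degree) choices of how that edge interacts with the farthest min-cut structure, it builds a $((k-1)+f-1)$-\ftrs ~on a modified instance whose $(s,t)$-max-flow is still at most $f$. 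By induction each such sub-instance contributes at most $2^{k-1}f$ in-edges to $t$, and summing over the at most $2$ branches gives $2 \cdot 2^{k-1} f = 2^k f$, as desired. The correctness (that the union is genuinely a $(k+f-1)$-\ftrs) follows the same residual-path / augmenting-path exchange argument as the original proof, now with the cut carrying $f$ units: after any $\le k+f-1$ failures, a surviving $s$-to-$t$ path of the required multiplicity is found by matching it, via the farthest-min-cut edges, to a path preserved in one of the recursive subgraphs.

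The main obstacle I anticipate is making the recursion on the modified instance preserve the invariant ``$\maxflow(s,t,\cdot)\le f$'' while simultaneously decrementing the fault parameter correctly — in the $f=1$ case this is the clean statement that removing one edge near the cut drops reachability-type structure by one level, but with $f>1$ one must be careful that contracting or deleting around the farthest min-cut does not spuriously raise the max-flow, and that the ``$+f-1$'' slack is consumed exactly once overall rather than once per recursion level. I would handle this by phrasing the invariant as: the instance passed to depth-$j$ recursion has $(s,t)$-max-flow at most $f$ and the subgraph returned is a $(k-j+f-1)$-\ftrs; the crucial check is that the farthest min-cut of the modified graph still has at most $f$ crossing edges, which follows from submodularity of cuts together with the Ford--Fulkerson characterization stated as the Property above. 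Once that invariant is nailed down, the counting and the augmenting-path correctness argument are essentially the same bookkeeping as in \cite{BaswanaCR:16}.
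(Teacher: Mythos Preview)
Your proposal rests on a misreading of Algorithm~\ref{algo:kFTRS}. The algorithm is \emph{not} recursive and does not ``guess the first failed edge near the cut'' and branch in two directions. It is iterative: starting from $S_1=\{s\}$ it computes a chain of farthest min-cuts $C_1,\dots,C_k$ with strictly growing source sets $S_{i+1}=(A_i\cup\Out(A_i,\G))\setminus\{t\}$, then takes a single max-flow from $S_{k+1}$ to $t$ and keeps exactly the in-edges of $t$ used by that flow. There are no recursive sub-instances, no union of subgraphs, and no invariant of the form ``$\maxflow(s,t,\cdot)\le f$'' to maintain; your inductive scheme, and the obstacle you worry about, simply do not correspond to what the algorithm does.

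The paper's proof proceeds quite differently. For correctness, fix any $F$ with $|F|\le k+f-1$ and a surviving path $R$ in $\G\setminus F$; for each $i$ let $(u_i,v_i)$ be the edge of $R$ crossing $C_i$, and form auxiliary graphs $\H_i=\G+(s,v_1)+\cdots+(s,v_{i-1})$. Lemma~18 of \cite{BaswanaCR:16} gives $\maxflow(s,t,\H_{i+1})=1+\maxflow(s,t,\H_i)$, so $\maxflow(s,t,\H_{k+1})=f+k$ (this is where the initial value $f$ enters --- the max-flow \emph{increases}, not stays bounded by $f$). Lemma~19 shows restricting in-edges of $t$ to $\mathcal{E}(t)$ preserves this value, and since $f+k>|F|$, Lemma~20 yields the required path in $\G^*\setminus F$. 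For the in-degree bound, Lemma~21 gives $|C_{i+1}|\le 2|C_i|$ from the out-degree-$\le 2$ hypothesis, and since $|C_1|=\maxflow(s,t,\G)=f$ (rather than $1$), one gets $|\mathcal{E}(t)|\le 2^k f$. The only new work beyond \cite{BaswanaCR:16} is observing that their Lemmas~18--21 go through verbatim with base value $f$ instead of $1$, and that Lemma~21 still holds when $s$ itself may have large out-degree (because $s\notin\Out(A_i)$). Your outline captures none of this structure and would need to be rewritten from scratch.
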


\begin{proof}
Consider the following algorithm from~\cite{BaswanaCR:16} for computing $k$-\ftrs ~that 
bounds in-degree of an input node $t$.

\begin{algorithm}[!ht]
\BlankLine
$S_1 \leftarrow \{s\}$\;
\For{$i=1$ to $k$}
{
    $C_i\leftarrow \fmc(S_i,t,\G)$\;
    $(A_i,B_i)\leftarrow $ Partition($C_i$)\;
    $S_{i+1}\leftarrow (A_i \cup \Out(A_i,\G) )\minus \{t\}$\;
}
$f_0 \leftarrow$ max-flow from $S_{k+1}$ to $t$\;
${\cal E}(t)\leftarrow$ Incoming edges of $t$ present in $E(f_0)$\;
Return $\G^* = (\G\minus \inedges(t,\G)\big) + {\cal E}(t)$\;
\caption{Algorithm for computing $k$-$\ftrs$}
\label{algo:kFTRS}
\end{algorithm}

We will now show $\G^*$ is a $(k+f-1)$-\ftrs ~of $\G$. Let $F$ be any set of $k+f-1$ failed edges. If there exists a path $R$ from $s$ to $t$ in $\G \minus F$ then we shall prove the existence of a path $\hat{R}$ from $s$ to $t$ in $\G^*\minus F$. Observe that $R$ must pass through each $(s,t)$-cut $C_i$, for each $i \in [1,k]$, through an edge, say $(u_i,v_i)$. If $v_i=t$ then $(u_i,v_i) \in {\cal E}(t)$ and thus $R$ is intact in the graph $\G^*$. Now we need to prove for the case when the edge $(u_i,v_i)\notin {\cal E}(t)$.

To prove that a path $\hat{R}$ exists in $\G^*$, we will construct a sequence of auxiliary graphs as done in \cite{BaswanaCR:16}, say $\H_i$’s, for each $i \in [1,k+1]$, as follows: 

\begin{center}
    $\H_1=\G$,~~~~ $\H_i=\G+ (s,v_1) +...+ (s,v_{i-1})$ , $i \in [2,k+1]$. \\
\end{center}

From the induction proof of Lemma 18 of \cite{BaswanaCR:16}, we get $\maxflow(s,t,\H_{i+1})=1+\maxflow(s,t,\H_{i})$ and since $\maxflow(s,t,H_{1})=\maxflow(s,t,\G)=f$, we get that $\maxflow(s,t,\H_{k+1})=k+f$. 
Let $\H^*= (\H_{k+1}\minus \inedges(t)) +{\cal E}(t)$ i.e. the incoming edges of $t$ are restricted in $\H_{k+1}$ to those present in the set ${\cal E}(t)$. In Lemma 19 of \cite{BaswanaCR:16} it is shown that $\maxflow(s,t,\H^{*})=\maxflow(s,t,\H_{k+1})=k+f$. Since the flow in $\H^*$ is greater than $|F|$ or the number of faults, we can directly use the Lemma 20 of \cite{BaswanaCR:16} to see that there exists a path $\hat{R}$ in $\G^* \minus F$.

The bound on the number of edges also follows from \cite{BaswanaCR:16}. Lemma 21 of \cite{BaswanaCR:16} states that $|C_{i+1}| \leq 2|C_i|$ where $C_{k+1}=FMC(S_{k+1},t,\G)$. Since $|C_1|=f$, we get the bound on  ${\cal E}(t)=C_{k+1}$ as $2^kf$. Note that the proof of Lemma 21 of \cite{BaswanaCR:16} assumes that every vertex has out-degree bounded by two but it can be shown that the Lemma will hold true even when the out-degree of all vertices except the source vertex is bounded by two by using the fact that in the proof of Lemma 21, $\Out(A_i)$ will never contain the source vertex for any $i$. 
\end{proof}

\subsection{Computing sparse $(\lambda,k)$-\ftbfp}

In this subsection, we will show how to construct a $(\lambda,k)$-\ftbfp ~of $\G$ from 
a $(k+f-1)$-\ftrs ~of $\G$. We will start by introducing a lemma from \cite{lokshtanov2019brief}, 
followed by additional lemmas that will help us to obtain a tight construction for \ftbfp.


%
%
%

\begin{lemma}[\cite{lokshtanov2019brief}]
Let $\G$ be a directed graph with a designated source node~$s$, and let
 $\H$ be a $(k+\lambda-1)$-\ftrs ~of $\G$. Then, $\H$ is also a $(\lambda,k)$-\ftbfp ~of $\G$. 
\label{lemma:older-reduction}
\end{lemma}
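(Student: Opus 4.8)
The plan is to show that for every fault set $F\subseteq E(\G)$ with $|F|\leq k$ and every target $t\in V$, the quantity $\min(\lambda,\maxflow(s,t,\H\minus F))$ equals $\min(\lambda,\maxflow(s,t,\G\minus F))$. Since $\H$ is a subgraph of $\G$, the inequality $\maxflow(s,t,\H\minus F)\leq\maxflow(s,t,\G\minus F)$ is immediate, so only the lower bound $\maxflow(s,t,\H\minus F)\geq\min(\lambda,\maxflow(s,t,\G\minus F))$ needs proof. I would fix $F$ and $t$, set $r=\min(\lambda,\maxflow(s,t,\G\minus F))$, and aim to exhibit $r$ edge-disjoint $s$-to-$t$ paths in $\H\minus F$.

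The key idea is to convert the problem of preserving $r$ edge-disjoint paths into $r$ separate reachability-preservation problems, each tolerant to $k+\lambda-1$ faults, so that the $(k+\lambda-1)$-\ftrs ~property of $\H$ applies. Concretely, I would take $r$ edge-disjoint $s$-to-$t$ paths $P_1,\dots,P_r$ in $\G\minus F$ (they exist by Menger's theorem and $r\leq\maxflow(s,t,\G\minus F)$). I want to show each $P_i$ can be ``rerouted'' inside $\H\minus F$ while keeping the paths edge-disjoint. For a single index $i$, consider the fault set $F_i = F \cup \bigl(\bigcup_{j\neq i} E(P_j)\bigr)$ restricted appropriately; since the $P_j$ are edge-disjoint, $\bigl|\bigcup_{j\neq i}E(P_j)\bigr|$ can be large, so one cannot naively add all of them as faults. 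The fix is to be economical: for each $j\neq i$, it suffices to delete just \emph{one} edge of $P_j$ to block it, but that destroys disjointness of the replacement path. The cleaner route is the standard flow-augmentation argument: process the paths one at a time, and when rerouting $P_i$, treat as ``forbidden'' only the edges already committed to the finalized replacement paths $\hat P_1,\dots,\hat P_{i-1}$ together with $F$; this forbidden set has size at most $|F| + (i-1)\cdot(\text{length})$, which is again too large. So the genuinely correct approach is to argue on min-cuts rather than on individual paths.

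Therefore I would instead argue as follows. Suppose for contradiction that $\maxflow(s,t,\H\minus F) < r \le \lambda$; then there is an $(s,t)$-cut $C$ in $\H\minus F$ of value at most $r-1 \le \lambda-1$. Let $F'$ be the set of edges of $\H\minus F$ crossing $C$, so $|F'|\leq \lambda-1$. Now $F\cup F'$ is a fault set of size at most $k+\lambda-1$, and in $\H\minus(F\cup F')$ there is no $s$-to-$t$ path (every such path, lying in $\H\minus F$, would have to cross $C$, but all crossing edges of $\H\minus F$ are in $F'$). Since $\H$ is a $(k+\lambda-1)$-\ftrs ~of $\G$, it follows that there is no $s$-to-$t$ path in $\G\minus(F\cup F')$ either. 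But $F'$ consists of at most $\lambda-1$ edges, so by Menger's theorem removing $F'$ from a graph where $\maxflow(s,t,\G\minus F)\geq r\geq\lambda$ cannot disconnect $s$ from $t$: we would still have at least $r-(\lambda-1)\geq 1$ edge-disjoint $s$-to-$t$ paths, a contradiction. Hence $\maxflow(s,t,\H\minus F)\geq r$, which is exactly what we need.

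The main obstacle is the bookkeeping in the contradiction step: one must be careful that $F'$ is drawn from $E(\H\minus F)$ (not all of $\G$), that $F$ and $F'$ are disjoint as subsets of $E(\G)$ so that $|F\cup F'|\le k+\lambda-1$, and that the cut $C$ in $\H\minus F$ is genuinely an $(s,t)$-cut in $\H\minus(F\cup F')$ after removing its own crossing edges. Everything else is an application of Menger's theorem plus the defining property of a $(k+\lambda-1)$-\ftrs, so no substantial new machinery is required.
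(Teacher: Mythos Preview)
The paper does not include its own proof of this lemma; it is quoted from \cite{lokshtanov2019brief}. Your min-cut argument is the standard one and is essentially correct, but the final contradiction step as written only covers the case $r=\lambda$: you assert ``$\maxflow(s,t,\G\minus F)\geq r\geq\lambda$'' and conclude ``at least $r-(\lambda-1)\geq 1$'', which is not justified when $r<\lambda$. The repair is already in your hands: you established that the cut $C$ in $\H\minus F$ has value at most $r-1$, so in fact $|F'|\leq r-1$ (not merely $\leq\lambda-1$). Since $\maxflow(s,t,\G\minus F)\geq r$, removing the at most $r-1$ edges of $F'$ still leaves at least $r-(r-1)=1$ edge-disjoint $s$-to-$t$ path in $\G\minus(F\cup F')$, which contradicts the conclusion drawn from the $(k+\lambda-1)$-\ftrs ~property. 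With this single replacement of $\lambda-1$ by $r-1$ the argument goes through uniformly for all $r\le\lambda$. The exploratory first half of your write-up (attempting to reroute the $P_i$ one by one) can be deleted; the cut argument is self-contained and is the intended proof.
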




To strengthen the above lemma, we present a method for constructing a 
$(\lambda,k)$-\ftbfp ~from a $(\min\{f,\lambda\}+k-1)$-\ftrs, where 
$f$ represents the maximum flow from the source node $s$ to a destination node $t$ in the graph.

\begin{lemma}
Let $\G$ be a directed graph with a designated source node~$s$, and 
let $t$ be a vertex satisfying $\maxflow(s,t,\G)= f$, for some positive integer $f$.
Then a $(\min\{f,\lambda\}+k-1)$-\ftrs ~of $\G$ that differs from $\G$ 
only at in-edges of $t$ is a $(\lambda,k)$-\ftbfp ~for~$\G$.
\label{lemma:improved-reduction}
\end{lemma}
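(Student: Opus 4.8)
The plan is to prove the statement by a case analysis on the value of $\maxflow(s,t,\G\minus F)$ for an arbitrary fault set $F$ with $|F|\le k$, reducing each case to the defining property of a $(\min\{f,\lambda\}+k-1)$-\ftrs. Write $\H$ for the given \ftrs, so $\H$ differs from $\G$ only at in-edges of $t$, and $\H$ preserves $s$-$t$ reachability after up to $\min\{f,\lambda\}+k-1$ edge failures. Since $\H$ is a subgraph of $\G$, we always have $\maxflow(s,t,\H\minus F)\le \maxflow(s,t,\G\minus F)$, so it suffices to show that whenever $\maxflow(s,t,\G\minus F)\ge r$ for some $r\le\lambda$ we also have $\maxflow(s,t,\H\minus F)\ge r$; equivalently, that $r$ edge-disjoint $s$-$t$ paths survive in $\H\minus F$.

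First I would handle the reachability bookkeeping. Suppose there are $r$ edge-disjoint $s$-$t$ paths in $\G\minus F$ with $r\le\lambda$; I want $r$ such paths in $\H\minus F$. The natural idea, as in the proof of Lemma~\ref{lemma:flow-property}, is to peel off paths one at a time: having already routed $i-1$ edge-disjoint paths $P_1,\dots,P_{i-1}$ in $\H\minus F$ using distinct in-edges of $t$, I treat the edges of $P_1,\dots,P_{i-1}$ as an additional fault set and ask whether $t$ is still reachable from $s$ in $\H$ after deleting $F$ together with these edges. The total number of deleted edges is at most $k + (i-1)$. Because $\G\minus F$ still has $r$ edge-disjoint $s$-$t$ paths and we have only removed $i-1$ of them (and $F$), $t$ is still reachable from $s$ in $\G\minus(F\cup E(P_1)\cup\dots\cup E(P_{i-1}))$; I need $\H$ to certify that reachability too. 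This requires $k+(i-1)\le \min\{f,\lambda\}+k-1$, i.e. $i\le\min\{f,\lambda\}$. Since $r\le\lambda$ and $r\le\maxflow(s,t,\G)=f$ (an $s$-$t$ cut in $\G\minus F$ is an $s$-$t$ cut in $\G$), we have $r\le\min\{f,\lambda\}$, so every index $i\le r$ satisfies $i\le\min\{f,\lambda\}$, and the induction goes through. At each step the new path uses an in-edge of $t$ not used by $P_1,\dots,P_{i-1}$ — here is where I use that $\H$ and $\G$ differ only at in-edges of $t$: the fault set I feed to the \ftrs\ property deletes all previously-used in-edges of $t$, so any surviving $s$-$t$ path enters $t$ through a fresh in-edge, keeping the paths edge-disjoint at $t$; disjointness elsewhere is automatic since we deleted all edges of the earlier paths.

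Then, to conclude, I chain the bounds: by Definition~\ref{definition:flow-preserver} I must show $\maxflow(s,t,\H\minus F)$ equals $\maxflow(s,t,\G\minus F)$ when the latter is $\le\lambda$, and is $\ge\lambda$ otherwise. In the first case, set $r=\maxflow(s,t,\G\minus F)\le\lambda$; the argument above gives $\maxflow(s,t,\H\minus F)\ge r$, and the subgraph inequality gives $\le r$, so equality holds. In the second case, apply the argument with $r=\lambda$ (legitimate since $\lambda\le\maxflow(s,t,\G\minus F)$ and $\lambda\le f$, the latter because otherwise $\min\{f,\lambda\}=f<\lambda$ but then $\maxflow(s,t,\G\minus F)\le f<\lambda$, contradiction — so in this case $\min\{f,\lambda\}=\lambda$ indeed), yielding $\maxflow(s,t,\H\minus F)\ge\lambda$. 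The main obstacle is getting the fault-budget accounting exactly right — verifying that $i\le\min\{f,\lambda\}$ at every step and that $F$ together with the edges of the previously chosen paths is a legitimate fault set of size at most $\min\{f,\lambda\}+k-1$ for the \ftrs\ property — and making precise the claim that deleting an already-used in-edge of $t$ forces the next surviving path to be edge-disjoint from the earlier ones; I'd want to state that disjointness claim carefully, perhaps by noting that a path to $t$ in $\G$ is determined near $t$ by its final edge and invoking the $\H$-vs-$\G$ difference being confined to $\inedges(t)$.
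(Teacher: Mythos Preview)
There are two genuine gaps in your proposal.

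\medskip
\noindent\textbf{The fault-budget arithmetic is wrong.}
You write that after deleting $F$ together with \emph{all} edges of the previously found paths $P_1,\dots,P_{i-1}$, ``the total number of deleted edges is at most $k+(i-1)$''. This is false: each $P_j$ is an $s$--$t$ path and may contain arbitrarily many edges, so $|F\cup E(P_1)\cup\dots\cup E(P_{i-1})|$ can vastly exceed $k+(i-1)$, and you cannot invoke the $(\min\{f,\lambda\}+k-1)$-\ftrs\ guarantee on that set. Later you rely on exactly this deletion to obtain edge-disjointness (``disjointness elsewhere is automatic since we deleted all edges of the earlier paths''), so the error is load-bearing, not cosmetic. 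If instead you delete only the $i-1$ last edges (the in-edges of $t$) used by $P_1,\dots,P_{i-1}$, the budget is fine, but then the new path $P_i$ need not be edge-disjoint from the earlier ones away from $t$, and you no longer get $r$ edge-disjoint paths.

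The clean fix is to argue via cuts rather than by peeling paths: if $\maxflow(s,t,\H\minus F)<r$, take an $(s,t)$-min-cut $C$ in $\H\minus F$ with $|C|\le r-1$ and set $F'=C$. Then $|F\cup F'|\le k+r-1\le k+\min\{f,\lambda\}-1$, and $t$ is unreachable from $s$ in $\H\minus(F\cup F')$; but since $\maxflow(s,t,\G\minus F)\ge r$, deleting $r-1$ further edges still leaves an $s$--$t$ path in $\G$, contradicting the \ftrs\ property. This is precisely (the proof of) Lemma~\ref{lemma:older-reduction}, which the paper simply cites: it yields that $\H$ is a $(\min\{f,\lambda\},k)$-\ftbfp.

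\medskip
\noindent\textbf{You only treat the single destination $t$.}
A $(\lambda,k)$-\ftbfp\ must preserve (bounded) flow to \emph{every} vertex, not just to the specific $t$ whose in-edges were trimmed. Your argument never leaves the pair $(s,t)$. This is not a triviality: even though $\H$ and $\G$ agree away from $\inedges(t)$, removing in-edges of $t$ can lower $\maxflow(s,v,\cdot)$ for other $v$ whose flow routes through $t$. The paper handles this with Lemma~\ref{lemma:flow-property}: once one knows $\maxflow(s,t,\H\minus F)=\min\{\lambda,\maxflow(s,t,\G\minus F)\}$, the in-edges of $t$ retained in $\H\minus F$ witness $\alpha$ edge-disjoint $s$--$t$ paths in $\G\minus F$, and Lemma~\ref{lemma:flow-property} then gives $\maxflow(s,v,\H\minus F)\ge\min\{\lambda,\maxflow(s,v,\G\minus F)\}$ for every $v$. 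Your proposal needs an analogous step.
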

\begin{proof}
Let $\H$ be a $(\min\{f,\lambda\}+k-1)$-\ftrs ~of $\G$ that deviates from $\G$ 
only at in-edges of~$t$. 
It follows from Lemma~\ref{lemma:older-reduction} that the subgraph $\H$ is a 
$(\min\{f,\lambda\},k)$-\ftbfp ~for $\G$.

The claim trivially holds true if $f\geq \lambda$, so let us 
consider the scenario $f<\lambda$.
Consider a set $F$ of at most $k$ edge failures in $\G$, and 
let $p$ be $\maxflow(s,t,\G\minus F)$.
Since $p\leq f< \lambda$ and $\H$~is a $(f,k)$-\ftbfp, the max-flow from $s$ to $t$ in $\H\minus F$
must be exactly $p$. 

Since $\G$ and $\H$ only differs at in-edges of $t$,
it follows from Lemma~\ref{lemma:flow-property} that for each $v\in V(\G)$, 
$\maxflow(s,v,\H\minus F)\geq \min(\lambda,\maxflow(s,v,\G\minus F)).$
This proves that $\H$ is a $(\lambda,k)$-\ftbfp ~for~$\G$.
\end{proof}

We	now provide construction of a 
$(\lambda,k)$-\ftbfp ~	that bounds 
the in-degree of a single destination node $t$.

\begin{lemma}
Let $\G$ be an $n$ vertex, $m$ edges directed graph with a designated source node~$s$,
and $t$ be any arbitrary vertex in $\G$.
Then for any $\lambda,k\geq 1$, we can compute in
$O(2^k\lambda m)$ time a $(\lambda,k)$-\ftbfp ~for $\G$ in which the in-degree of $t$ is bounded
above by $2^k \lambda$.
\label{lemma:ft-bfp-single-node}
\end{lemma}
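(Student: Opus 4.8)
The plan is to combine the three tools just established: the improved FTRS construction of Theorem~\ref{theorem:flow-FTRS}, the reduction from FTRS to FT-BFP given by Lemma~\ref{lemma:improved-reduction}, and (if needed) the single-edge locality statement Lemma~\ref{lemma:flow-property}. First I would set $f=\maxflow(s,t,\G)$ and split into the two cases dictated by whether $f$ is large or small relative to $\lambda$. If $f\geq\lambda$, I invoke Theorem~\ref{theorem:flow-FTRS} with $k'=k$ but targeting a $(k+\lambda-1)$-FTRS: more precisely, I want a $(\min\{f,\lambda\}+k-1)$-FTRS whose in-degree at $t$ is bounded by $2^k\lambda$. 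In the case $f\geq\lambda$, the first $\lambda$ of the $f$ edge-disjoint paths already give a cut $C_1$ of size $\lambda$ at the start of Algorithm~\ref{algo:kFTRS}, and running $k-1$ doubling steps à la Lemma~21 of~\cite{BaswanaCR:16} yields in-degree at most $2^{k-1}\lambda\leq 2^k\lambda$; alternatively one directly reuses Theorem~\ref{theorem:flow-FTRS} with the parameter $f$ replaced by $\lambda$ after first restricting to $\lambda$ edge-disjoint $s$-$t$ paths (this is exactly the hypothesis of Lemma~\ref{lemma:flow-property} applied at $v=t$, which guarantees the restriction preserves $\min(\lambda,\cdot)$ flow to every vertex, in particular keeps $\maxflow(s,t,\cdot)=\lambda$). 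So in either branch I obtain a subgraph $\H$ differing from $\G$ only at in-edges of $t$, with $|\In(t,\H)|\leq 2^k\lambda$, that is a $(\min\{f,\lambda\}+k-1)$-FTRS of $\G$.

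Next, I apply Lemma~\ref{lemma:improved-reduction} verbatim: since $\H$ is a $(\min\{f,\lambda\}+k-1)$-FTRS of $\G$ that deviates from $\G$ only at in-edges of $t$, it is a $(\lambda,k)$-FT-BFP for $\G$. The in-degree bound $2^k\lambda$ at $t$ is inherited, and all other vertices keep their full in-neighbourhoods, which is exactly what the lemma statement requires (it only bounds the in-degree of the single node $t$). For the case $f<\lambda$, I run Theorem~\ref{theorem:flow-FTRS} directly with the true $f$: it produces, in $O(2^k f m)\subseteq O(2^k\lambda m)$ time, a $(k+f-1)=(\min\{f,\lambda\}+k-1)$-FTRS with in-degree at $t$ at most $2^k f\leq 2^k\lambda$, and again Lemma~\ref{lemma:improved-reduction} upgrades it to a $(\lambda,k)$-FT-BFP. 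The running time in all cases is dominated by Theorem~\ref{theorem:flow-FTRS}, giving $O(2^k\lambda m)$.

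The one point needing care — and the likely main obstacle — is the case $f\geq\lambda$: Theorem~\ref{theorem:flow-FTRS} as stated takes the parameter $f=\maxflow(s,t,\G)$ and delivers a $(k+f-1)$-FTRS, which is far more (and far denser, $2^kf$) than what we need. The fix is to pre-restrict: choose any $\lambda$ edge-disjoint $s$-$t$ paths in $\G$, let $\E_t$ be their in-edges at $t$, and let $\G'$ be $\G$ with $\In(t)$ restricted to $\E_t$. By Lemma~\ref{lemma:flow-property} (applied with $v=t$, $\alpha=\lambda$), for every vertex $u$ we have $\maxflow(s,u,\G')\geq\min(\lambda,\maxflow(s,u,\G))$; in particular $\maxflow(s,t,\G')=\lambda$. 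Now run Theorem~\ref{theorem:flow-FTRS} on $\G'$ with flow parameter $\lambda$: it computes a $(k+\lambda-1)$-FTRS $\H$ of $\G'$ with in-degree at $t$ at most $2^k\lambda$, differing from $\G'$ (hence from $\G$, since $\G'$ and $\G$ differ only at in-edges of $t$ and $\H\subseteq\G'$) only at in-edges of $t$. One must check that a $(k+\lambda-1)$-FTRS of $\G'$ is good enough: for any $F$ with $|F|\leq k$ and any destination $u$, if $\maxflow(s,u,\G\minus F)\leq\lambda$ then $\maxflow(s,u,\G'\minus F)\geq\maxflow(s,u,\G\minus F)$ could a priori exceed it, but reachability of $\lambda$-many disjoint paths to $u$ is all we need, and the FTRS property of $\H$ w.r.t.\ $\G'$ together with Lemma~\ref{lemma:improved-reduction} (whose proof only uses Lemma~\ref{lemma:flow-property} at $t$ and the FTRS guarantee) closes the argument. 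Packaging this cleanly — so that the single inequality direction supplied by Lemma~\ref{lemma:flow-property} suffices and no reverse bound is silently assumed — is the delicate part; everything else is direct substitution into the preceding lemmas.
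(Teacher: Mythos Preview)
Your treatment of the case $f<\lambda$ is correct and coincides with the paper's. The problem is entirely in the case $f\geq\lambda$, and the ``delicate part'' you flag is not just delicate --- it is actually false as stated. When you pre-restrict $\G$ to $\G'$ by keeping only $\lambda$ in-edges of $t$ (those coming from $\lambda$ edge-disjoint $s$--$t$ paths), the resulting $\G'$ has in-degree exactly~$\lambda$ at $t$, and so does any subgraph $\H\subseteq\G'$. Now fail a single one of those $\lambda$ in-edges: $\maxflow(s,t,\H\setminus F)\leq\lambda-1$, while $\maxflow(s,t,\G\setminus F)$ can still be~$\geq\lambda$ (since $f\geq\lambda$ and only one edge was removed). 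So $\H$ is \emph{not} a $(\lambda,k)$-\ftbfp\ of~$\G$, for any $k\geq1$. The point is that Lemma~\ref{lemma:flow-property} is a static statement; it says nothing about what happens to $\G'$ after failures, and there is no way to repair this once you have thrown away all but $\lambda$ in-edges of~$t$. Running Theorem~\ref{theorem:flow-FTRS} on~$\G'$ afterwards cannot help, since it only produces a further subgraph of~$\G'$.

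The paper's fix is to split the cases differently, at the threshold $f\geq\lambda+k$ rather than $f\geq\lambda$. When $f\geq\lambda+k$, one simply keeps $\lambda+k$ in-edges of $t$ coming from $\lambda+k$ edge-disjoint $s$--$t$ paths; after any $k$ failures at least $\lambda$ of these paths survive, so this subgraph is a $(\lambda+k-1)$-\ftrs\ of~$\G$ (differing from $\G$ only at in-edges of~$t$), and Lemma~\ref{lemma:improved-reduction} applies. The in-degree bound is $\lambda+k\leq 2^k\lambda$. In the remaining range $\lambda\leq f<\lambda+k$ one invokes Theorem~\ref{theorem:flow-FTRS} on $\G$ itself (no pre-restriction) with the smaller parameter $k'=\lambda+k-f\in[1,k-1]$, obtaining a $(k'+f-1)=(\lambda+k-1)$-\ftrs\ with in-degree at most $2^{k'}f\leq 2^k\lambda$; the last inequality follows from $f-\lambda\leq 2^{f-\lambda}-1\leq(2^{f-\lambda}-1)\lambda$.
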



\begin{proof}
Let $f=\maxflow(s,t,\G)$.
We present a construction of a $(\lambda,k)$-\ftbfp, say $\H$, by considering the following two cases.\\

\noindent
Case 1. $\maxflow(s,t,\G) \geq \lambda+k$: \\ 
Let us start by taking a look at the scenario $f\geq\lambda+k$. 
In this case we can choose any $\lambda+k$ incoming edges of $t$ which carry a flow of $\lambda+k$ from $s$ to $t$ and discard all other incoming edges of $t$ to construct $\H$. The resulting graph $\H$ will be a $(\lambda,k)$-\ftbfp ~of $\G$ due to Lemma~\ref{lemma:improved-reduction}, and the in-degree of $t$ in $\H$ will be
$\lambda+k\leq 2^k\lambda$. 
\\

\noindent
Case 2. $\maxflow(s,t,\G) < \lambda+k$: \\ 
We next consider the case $f<\lambda+k$. In this case we use Theorem~\ref{theorem:flow-FTRS}
to compute a $(\min\{f,\lambda\}+k-1)$-\ftrs ~of $\G$, say $\H_0$, such that
the in-degree of $t$ in $\H_0$
is at most $2^k\min\{f,\lambda\}$. 
We obtain the graph $\H$ from $\G$ by limiting the incoming edges of $t$ to those present in $\H_0$.
The resulting graph $\H$ will be a $(\lambda,k)$-\ftbfp ~of~$\G$ due to 
Lemma~\ref{lemma:improved-reduction}.
\end{proof}


We conclude with the following theorem that directly follows by
combining together Lemma~\ref{lemma:locality-lemma} and Lemma~\ref{lemma:ft-bfp-single-node}.

\begin{theorem}
Let $\G$ be an $n$ vertex, $m$ edges directed graph with a designated source node~$s$.
Then for any $\lambda,k\geq 1$, we can compute in
$O(2^k\lambda mn)$ time a $(\lambda,k)$-\ftbfp ~for $\G$ in which the in-degree of every vertex is bounded
above by $2^k \lambda$.
\end{theorem}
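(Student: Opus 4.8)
The plan is to obtain this theorem as a mechanical composition of the two preceding results: the Locality Lemma (Lemma~\ref{lemma:locality-lemma}) and the single-destination construction of Lemma~\ref{lemma:ft-bfp-single-node}. Concretely, I would take $\A$ to be the algorithm furnished by Lemma~\ref{lemma:ft-bfp-single-node}: given any graph $\G'$ and any vertex $v\in V(\G')$, it runs in $O(2^k\lambda\,|E(\G')|)$ time and outputs a $(\lambda,k)$-\ftbfp ~of $\G'$ in which the in-degree of $v$ is at most $2^k\lambda$. This is precisely the hypothesis that Lemma~\ref{lemma:locality-lemma} demands, with the parameter $c_{\lambda,k}$ set to $2^k\lambda$.

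First I would simply invoke Lemma~\ref{lemma:locality-lemma} with this choice of $\A$. Its proof constructs a sequence $\G=\G_0,\G_1,\dots,\G_n$, where $\G_i$ is formed from $\G_{i-1}$ by calling $\A$ to bound the in-degree of $v_i$ and then restricting the in-edges of $v_i$ to those surviving in that call. The lemma already certifies that $\G_n$ is a $(\lambda,k)$-\ftbfp ~of $\G$ and that every vertex in $\G_n$ has in-degree at most $c_{\lambda,k}=2^k\lambda$; summing this bound over the $n$ vertices yields at most $2^k\lambda n$ edges, as claimed. No new correctness argument is needed here — the work is all done inside Lemma~\ref{lemma:locality-lemma}, which in turn rests on Lemma~\ref{lemma:flow-property}.

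The remaining point is the running time, and the key observation is that each $\G_{i-1}$ is a subgraph of $\G$, since every step only deletes edges. Hence $|E(\G_{i-1})|\le m$, so the $i$-th invocation of $\A$ costs $O(2^k\lambda m)$ by Lemma~\ref{lemma:ft-bfp-single-node}, and the edge-restriction in step (b) is linear in $m$. Over the $n$ iterations this sums to $O(2^k\lambda mn)$.

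There is no genuine obstacle in this final step — it is a direct composition — so the only thing worth double-checking is that the output of Lemma~\ref{lemma:ft-bfp-single-node} really satisfies the interface required by Lemma~\ref{lemma:locality-lemma} (a $(\lambda,k)$-\ftbfp ~with the stated in-degree bound on the queried vertex, computable in the stated time), and that the in-edge count fed to $\A$ at each stage stays bounded by $m$. Both are immediate from the fact that the construction is monotone in edge deletions, which completes the argument.
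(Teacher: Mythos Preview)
Your proposal is correct and matches the paper's own treatment: the paper states that this theorem ``directly follows by combining together Lemma~\ref{lemma:locality-lemma} and Lemma~\ref{lemma:ft-bfp-single-node},'' which is precisely the composition you describe. Your added running-time accounting (each $\G_{i-1}$ is a subgraph of $\G$, so each call to $\A$ costs $O(2^k\lambda m)$) is the natural justification the paper leaves implicit.
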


\section{Matching Lower Bound}

We shall now show that for each $\lambda,k$, $n$ ($n\geq 3\lambda  2^k$), there exists a directed graph 
$\G$ with $O(n)$
vertices whose $(\lambda,k)$-\ftbfp~must have $\Omega(2^k\lambda n)$ edges. 

The construction of graph $\G$ is as follows. 
Let $B_1,\ldots,B_\lambda$ be vertex-disjoint complete binary trees of height $k$ rooted at vertices $r_1,\ldots,r_k$, and let $s$ be a new vertex have an edge to each of the $r_i$'s. Let $X$ denote the set of leaf nodes of these $\lambda$ trees, and let $Y$ be another set containing $n-\sum_{i=1}^\lambda|V(B_i)|-1~(\geq n/3)$ vertices.
Then the graph $G$ is obtained by adding an edge from each $x\in X$ to each $y\in Y$.
In other words, $V(G)=\{s\}\cup V(B_1)\cup\cdots \cup V(B_\lambda)\cup Y$ and 
$E(G)=\{(s,r_i)~|~1\leq i\leq \lambda\}\cup E(B_1)\cup\cdots \cup(B_\lambda)\cup(X\times Y)$.

We prove in the following lemma that any $(\lambda,k)$-\ftbfp~of the above
constructed graph contains at least  $\Omega(2^k\lambda n)$ edges.

\begin{lemma} Any $(\lambda,k)$-\ftbfp~of $\G$ must contain $\Omega(2^k\lambda n)$ edges.
\end{lemma}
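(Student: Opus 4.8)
The plan is to show that for every vertex $y \in Y$, any $(\lambda,k)$-\ftbfp~$\H$ of $\G$ must retain $\Omega(2^k \lambda)$ of the incoming edges of $y$ (i.e.\ a constant fraction of the $|X| = \lambda 2^k$ edges from $X$ to $y$); summing over the $|Y| = \Omega(n)$ choices of $y$ then yields the bound $\Omega(2^k\lambda n)$. The key structural observation is that in $\G$, for any single leaf $x$ of any tree $B_i$ and any target $y$, there is a fault set $F$ of size exactly $k$ that isolates $x$ as the \emph{only} leaf of $B_i$ reachable from $s$: namely, take the $k$ edges hanging off the root-to-$x$ path in $B_i$ (for each internal node on the path, delete the edge to the child \emph{not} on the path). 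After these $k$ failures, the only $s$-$r_i$-$y$ path through tree $B_i$ must use the edge $(x,y)$.

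\textbf{The core argument.} Fix $y\in Y$ and fix a tree $B_i$ and a leaf $x$ of $B_i$. Consider the fault set $F$ described above (the $k$ "off-path" edges of $B_i$). In $\G\minus F$, the max-flow from $s$ to $y$ is exactly $\lambda$: each tree $B_j$ with $j\ne i$ still routes one unit through some leaf to $y$, and $B_i$ routes one unit through $x$ to $y$; and no more than $\lambda$ is possible since $s$ has only $\lambda$ out-edges. By the defining property of a $(\lambda,k)$-\ftbfp, $\maxflow(s,y,\H\minus F) = \lambda$ as well. But any $s$-$y$ flow of value $\lambda$ in $\H\minus F$ must send one unit through each $r_j$ (again because $s$ has out-degree $\lambda$), and in particular one unit through $r_i$; the only route from $r_i$ to $y$ in $\G\minus F$ — hence in $\H\minus F$ — passes through $x$, so the edge $(x,y)$ must be present in $\H$. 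Since this holds for \emph{every} leaf $x$ of \emph{every} tree $B_i$, \emph{all} $\lambda 2^k$ edges of $X\times \{y\}$ must lie in $\H$. Thus $\H$ retains all $\lambda 2^k$ incoming edges at each $y\in Y$, giving $|E(\H)| \geq \lambda 2^k \cdot |Y| \geq \lambda 2^k \cdot n/3 = \Omega(2^k\lambda n)$.

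\textbf{Main obstacle.} The only delicate point is verifying the max-flow claims carefully: that $\maxflow(s,y,\G\minus F)$ really equals $\lambda$ (not less) after deleting the $k$ off-path edges — this uses that the trees are vertex-disjoint and each surviving tree still has a full root-to-leaf path, so the $\lambda$ paths $s \to r_j \to (\text{leaf}_j) \to y$ are edge-disjoint — and that the flow forces usage of the unique surviving $r_i$-to-$y$ route. One must also handle the edge case where $k$ or $\lambda$ is small, but since the statement is asymptotic and we assume $n \geq 3\lambda 2^k$ the set $Y$ is nonempty with $|Y|\geq n/3$, so no genuine difficulty arises there. Everything else is bookkeeping: the construction has $O(n)$ vertices because $\sum_i |V(B_i)| = \lambda(2^{k+1}-1) \leq 2n$, and the argument never needs more than the stated earlier results (in fact it needs none of them — it is a self-contained counting argument against the definition).
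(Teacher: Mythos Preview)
Your proposal is correct and follows essentially the same argument as the paper: for each leaf $x$ of a tree $B_i$ and each $y\in Y$, fail the $k$ off-path edges in $B_i$ so that $x$ is the unique leaf of $B_i$ reachable from $s$, observe that $\maxflow(s,y,\G\minus F)=\lambda$, and conclude that the edge $(x,y)$ must lie in any $(\lambda,k)$-\ftbfp. Your write-up is in fact more careful than the paper's in justifying why the flow value is exactly $\lambda$ and why one unit must traverse $r_i$ and hence $(x,y)$.
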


\begin{proof}
It is easy to see that the out-edges of $s$, and the
edges of each of the binary tree $B_i$'s must be present in a $(\lambda,k)$-\ftbfp ~of $\G$. Thus, let us consider an edge $(x,y)\in X\times Y$,
%
%
where $x$ is the leaf node of some binary tree $B_i$. 

Let $P$ be the unique path from $r_i$ to $x$ in $B_i$, and let $F$ be the set of all those edges $(u,v)\in B_i$
such that $u\in P$ and $v$ is the child of $u$ not lying on $P$. 
On failure of set $F$, there remains a unique path from $s$ to $y$ that passes through edge $(s,r_i)$. 
Moreover, $\maxflow(s,y,\G\minus F)=\lambda$.
So, any subgraph $\H$ of $\G$ not containing $(x,y)$ edge would 
not be a $(\lambda,k)$-\ftbfp ~as
on failure set $F$, $\H$ would not preserve $(s,y)$-max-flow.

Hence, any $(\lambda,k)$-\ftbfp~of $\G$ contains at least $|X\times Y|=2^k\lambda|Y|\geq2^k\lambda n/3$ edges.
\end{proof}

\begin{figure}
\centering
\includegraphics[width=0.95\textwidth]{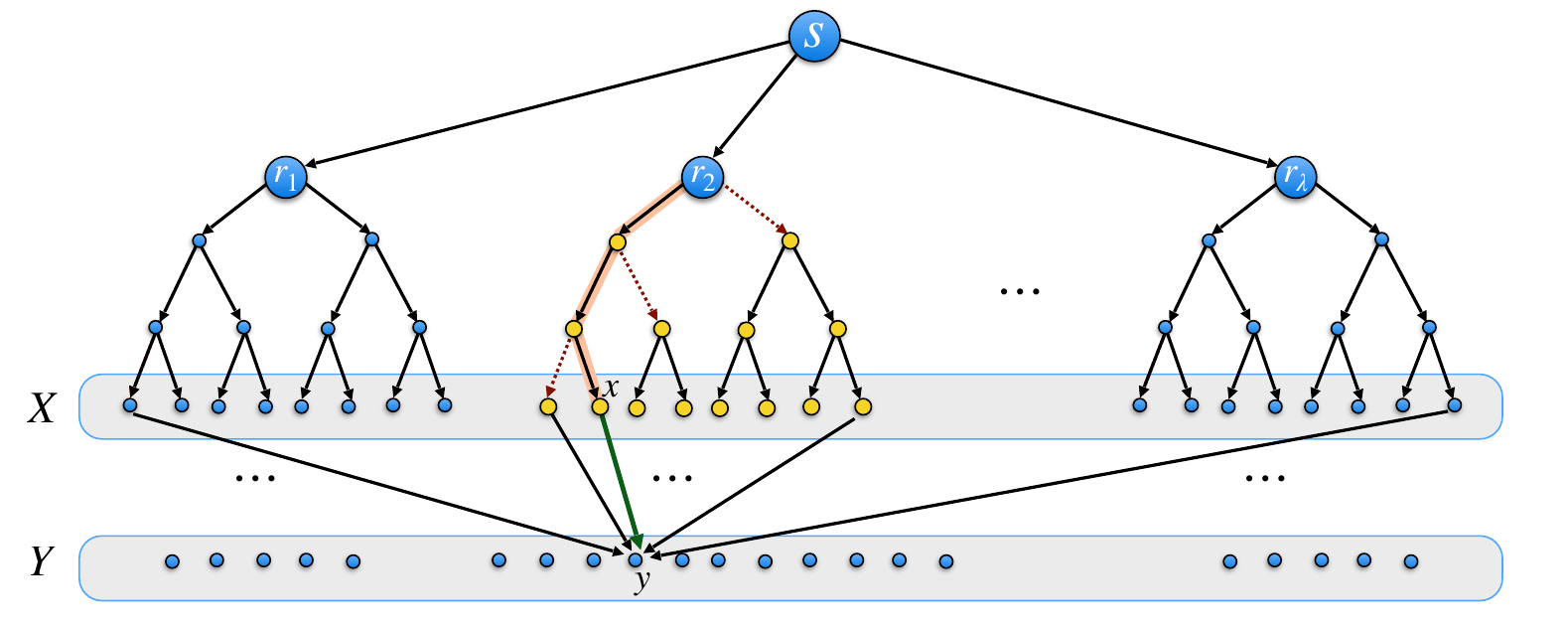}
\caption{Depiction of lower bound on the size of $(\lambda,k)$-\ftbfp ~when $k=3$.}
\end{figure}

\section{Applications}
\label{section:applications}

In this section we present applications of $\ftbfp$ structure.

\subsection{Fault-tolerant All-Pairs $\lambda$-reachability oracle}

Georgiadis et al. \cite{GeorgiadisGIPU17} showed that for any $n$ vertex directed graph $\G=(V,E)$ we can compute 2-reachability information for all pairs of vertices in $O(n^\omega \log n)$ time, where $\omega$ is the matrix multiplication exponent.
Abboud et at.~\cite{AbboudGIKPTUW19} extended this result to all-pairs $\lambda$-reachability by presenting an algorithm that takes ${O}((\lambda\log n)4^{\lambda+o(\lambda)}\cdot n^\omega)$ time. One of the interesting open questions is if for any constants 
$\lambda,k\geq 1$, we can compute an oracle that given any query vertex-pair $x,y\in V$ and any set $F$ of $k$ edge failures, reports $(x,y)$-$\lambda$-reachability in $\G\setminus F$ efficiently.

For any vertex $x\in V$, let $\H_x$ denote a $(\lambda,k)$-FT-BFP of $\G$ with $x$ as the source.
Our data structure simply stores the graph family $\{\H_x\text{~| }x\in V\}$.
Given any query vertex-pair $(x,y)$ and any set $F$ of $k$ edges, we 
compute the $(x,y)$-max-flow in $\H_x$ by employing the max-flow algorithm of Chen et al.~\cite{ChenKLPGS22}.
The time to compute the max-flow is $O(|E(\H_x)|^{1+o(1)})$, which is just $O(2^k\lambda n^{1+o(1)})$.
Note that the total space used is bounded by $O(2^k\lambda n^2)$. Therefore, we have the following theorem.
\begin{theorem}
Given any directed graph $\G=(V,E)$ on $n$ vertices, and any positive constants $\lambda,k\geq 1$, we can preprocess $G$ in polynomial time to build an $O(n^2)$ size data structure that, given any query vertex-pair $(x,y)$ and any set $F$ of $k$ edges, can determine the $(x,y)$-$\lambda$-reachability in $\G\setminus F$ in 
$O(n^{1+o(1)})$ time.
\end{theorem}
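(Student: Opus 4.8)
The plan is to follow exactly the recipe indicated just before the statement: precompute one single-source FT-BFP per potential source, store the whole family, and answer each query by a single fast max-flow computation on the relevant (sparse) preserver. First I would do the preprocessing. For each vertex $x\in V$, I would invoke our main construction (the $O(\lambda 2^k mn)$-time algorithm) with $x$ in the role of the source, obtaining a $(\lambda,k)$-\ftbfp $\H_x$ of $\G$ with at most $\lambda 2^k n$ edges, and store the family $\{\H_x : x\in V\}$. Since $\lambda$ and $k$ are constants, each $\H_x$ has $O(n)$ edges, so the data structure has size $O(n^2)$ and the total preprocessing time is $n\cdot O(\lambda 2^k mn)=\mathrm{poly}(n)$.

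For the query step, given a pair $(x,y)$ and a set $F$ of at most $k$ edges, I would delete from $\H_x$ those edges of $F$ that it actually contains and run the almost-linear-time max-flow algorithm of Chen et al.~\cite{ChenKLPGS22} on $\H_x\minus F$ to obtain $\maxflow(x,y,\H_x\minus F)$; the answer returned is $\min\bigl(\lambda,\maxflow(x,y,\H_x\minus F)\bigr)$. Because $\H_x\minus F$ has $O(\lambda 2^k n)$ edges, this takes $O\bigl((\lambda 2^k n)^{1+o(1)}\bigr)=O(n^{1+o(1)})$ time for constant $\lambda,k$.

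Correctness is immediate from the defining property of FT-BFP: since $\H_x$ is a $(\lambda,k)$-\ftbfp of $\G$ with source $x$ and $|F|\le k$, Definition~\ref{definition:flow-preserver} gives that $\maxflow(x,y,\H_x\minus F)$ equals $\maxflow(x,y,\G\minus F)$ when the latter is at most $\lambda$, and is at least $\lambda$ otherwise; in both cases $\min\bigl(\lambda,\maxflow(x,y,\H_x\minus F)\bigr)=\min\bigl(\lambda,\maxflow(x,y,\G\minus F)\bigr)$, which is precisely the $(x,y)$-$\lambda$-reachability value in $\G\minus F$.

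There is no real mathematical obstacle here once the FT-BFP construction is available; the proof is essentially bookkeeping. The only points that need a sentence of care are (i) that $F$ may contain edges not present in $\H_x$, which is harmless since deleting a non-edge changes nothing, and (ii) lining up the $O(n^{1+o(1)})$ max-flow running time with the $O(\lambda 2^k n)$ edge bound of $\H_x$, which works only because that bound is linear in $n$ for constant $\lambda,k$. Optionally one can avoid even computing the exact max-flow by halting the augmenting-path search of the max-flow routine after $\lambda$ units of flow are routed, but this refinement is not needed for the stated bounds.
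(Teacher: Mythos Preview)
Your proposal is correct and follows essentially the same approach as the paper: store the family $\{\H_x : x\in V\}$ of $(\lambda,k)$-\ftbfp s, one per source, and answer each query by running the almost-linear max-flow algorithm of Chen et al.\ on $\H_x\minus F$. The extra remarks you make about edges of $F$ absent from $\H_x$ and about capping the flow at $\lambda$ are harmless elaborations, but the core argument is identical.
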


\subsection{\ftbfp s for graphs with non-unit capacities}

We have shown till now that for any digraph $\G$ with unit capacities,
one can compute a $(\lambda,k)$-\ftbfp ~with $O(2^k\lambda n)$ edges.
We shall now show how to extend this result to a digraph with integer edge capacities
such that flow values up to $\lambda$ are preserved under bounded capacity decrement.

Let us first formalize the notion of $\ftbfp$ under capacity decrement function.

\begin{definition}
Let $\G=(V,E,c)$ be a directed flow graph such that capacity 
of any edge 
is a positive integer, and 
let $s\in V$ be a designated source vertex. A subgraph $\H=(V,E_0\subseteq E)$ of $\G$
is said to be a $(\lambda,k)$-Fault-Tolerant Bounded-Flow-Preserver if for any 
capacity decrement function $I:E(G)\rightarrow \mathbb{N}$ satisfying $\sum_{e\in E(G)}I(e)\leq k$,
the following holds for the capacity function $c^*$ defined as $c^*(e)=c(e)-I(e)$, for $e\in E$:
\\
For every $t\in V$, 
$$\maxflow(s,t,\H|c^*)
=
\begin{cases} 
\maxflow(s,t,\G|c^*)&\text{if~~}\maxflow(s,t,\G|c^*)\leq \lambda,\\
\text{At least }\lambda,&\text{otherwise};\\
\end{cases}$$
where, $\H|c^*$ and $\G|c^*$ are respectively the graphs $\H$ and $\G$ with capacity function $c^*$.
\end{definition}

\vspace{2mm}

Let us now discuss the construction of $(\lambda,k)$-\ftbfp s. 
Let $\G=(V,E,c)$  be a digraph with integer edge capacities.
We first transform $\G$ into a multigraph $\G^*$ 
by replacing an edge $(x,y)$ of capacity $c(x,y)$
by exactly $c(x,y)$ copies of edge $(x,y)$ of unit-capacity.
Thus, for vertex $v\in V$, the $s$ to $v$ max-flow in graphs $\G$ and $\G^*$
are identical.

Now, let $\H^*$ be a $(\lambda,k)$-\ftbfp ~of multigraph $\G^*$.
Then, a $(\lambda,k)$-\ftbfp ~of $\G$, say $\H=(V,E_0,c)$,
can be obtained by simply retaining all those edges whose
multiplicity in $\H^*$ is non-zero.
The graph $\H$ will indeed be a $(\lambda,k)$-\ftbfp ~of $\G$
since a bounded capacity decrement in $\G$ corresponds 
to $k$-edge failures in $\G^*$.
\vspace{1mm}

\bibliographystyle{plainurl} 
\bibliography{references}

\begin{thebibliography}{10}

\bibitem{AbboudGIKPTUW19}
Amir Abboud, Loukas Georgiadis, Giuseppe~F. Italiano, Robert Krauthgamer, Nikos
  Parotsidis, Ohad Trabelsi, Przemyslaw Uznanski, and Daniel Wolleb{-}Graf.
\newblock Faster algorithms for all-pairs bounded min-cuts.
\newblock In Christel Baier, Ioannis Chatzigiannakis, Paola Flocchini, and
  Stefano Leonardi, editors, {\em 46th International Colloquium on Automata,
  Languages, and Programming, {ICALP} 2019, July 9-12, 2019}, volume 132 of
  {\em LIPIcs}, pages 7:1--7:15, 2019.

\bibitem{BaswanaBP22}
Surender Baswana, Koustav Bhanja, and Abhyuday Pandey.
\newblock Minimum+1 (s, t)-cuts and dual edge sensitivity oracle.
\newblock In Mikolaj Bojanczyk, Emanuela Merelli, and David~P. Woodruff,
  editors, {\em 49th International Colloquium on Automata, Languages, and
  Programming, {ICALP} 2022, July 4-8, 2022, Paris, France}, volume 229 of {\em
  LIPIcs}, pages 15:1--15:20. Schloss Dagstuhl - Leibniz-Zentrum f{\"{u}}r
  Informatik, 2022.

\bibitem{BCCK19}
Surender Baswana, Shreejit~Ray Chaudhury, Keerti Choudhary, and Shahbaz Khan.
\newblock Dynamic {DFS} in undirected graphs: Breaking the o(m) barrier.
\newblock {\em {SIAM} J. Comput.}, 48(4):1335--1363, 2019.

\bibitem{BaswanaCR:16}
Surender Baswana, Keerti Choudhary, and Liam Roditty.
\newblock Fault-tolerant subgraph for single-source reachability: General and
  optimal.
\newblock {\em SIAM Journal on Computing}, 47(1):80--95, 2018.
\newblock \href {https://arxiv.org/abs/https://doi.org/10.1137/16M1087643}
  {\path{arXiv:https://doi.org/10.1137/16M1087643}}, \href
  {https://doi.org/10.1137/16M1087643} {\path{doi:10.1137/16M1087643}}.

\bibitem{BK13}
Surender Baswana and Neelesh Khanna.
\newblock Approximate shortest paths avoiding a failed vertex: Near optimal
  data structures for undirected unweighted graphs.
\newblock {\em Algorithmica}, 66(1):18--50, 2013.

\bibitem{BGLP16}
Davide Bil{\`{o}}, Luciano Gual{\`{a}}, Stefano Leucci, and Guido Proietti.
\newblock Multiple-edge-fault-tolerant approximate shortest-path trees.
\newblock In {\em 33rd Symposium on Theoretical Aspects of Computer Science,
  {STACS} 2016}, pages 18:1--18:14, 2016.

\bibitem{Chechik13}
Shiri Chechik.
\newblock Fault-tolerant compact routing schemes for general graphs.
\newblock {\em Inf. Comput.}, 222:36--44, 2013.

\bibitem{CLPR09}
Shiri Chechik, Michael Langberg, David Peleg, and Liam Roditty.
\newblock Fault-tolerant spanners for general graphs.
\newblock In {\em Proceedings of the 41st Annual {ACM} Symposium on Theory of
  Computing, {STOC} 2009}, pages 435--444, 2009.

\bibitem{CLPR10}
Shiri Chechik, Michael Langberg, David Peleg, and Liam Roditty.
\newblock {\it f}-sensitivity distance oracles and routing schemes.
\newblock In {\em 18th Annual European Symposium on Algorithms - ESA (1)},
  pages 84--96, 2010.

\bibitem{ChenKLPGS22}
Li~Chen, Rasmus Kyng, Yang~P. Liu, Richard Peng, Maximilian~Probst Gutenberg,
  and Sushant Sachdeva.
\newblock Maximum flow and minimum-cost flow in almost-linear time.
\newblock In {\em 63rd {IEEE} Annual Symposium on Foundations of Computer
  Science, {FOCS} 2022, Denver, CO, USA, October 31 - November 3, 2022}, pages
  612--623. {IEEE}, 2022.

\bibitem{Choudhary16}
Keerti Choudhary.
\newblock An optimal dual fault tolerant reachability oracle.
\newblock In {\em 43rd International Colloquium on Automata, Languages, and
  Programming, {ICALP} 2016}, pages 130:1--130:13, 2016.

\bibitem{DTCR08}
Camil Demetrescu, Mikkel Thorup, Rezaul~Alam Chowdhury, and Vijaya
  Ramachandran.
\newblock Oracles for distances avoiding a failed node or link.
\newblock {\em {SIAM} J. Comput.}, 37(5):1299--1318, 2008.

\bibitem{DK11}
Michael Dinitz and Robert Krauthgamer.
\newblock Fault-tolerant spanners: better and simpler.
\newblock In {\em Proceedings of the 30th Annual {ACM} Symposium on Principles
  of Distributed Computing, {PODC} 2011}, pages 169--178, 2011.

\bibitem{DP09}
Ran Duan and Seth Pettie.
\newblock Dual-failure distance and connectivity oracles.
\newblock In {\em Proceedings of the 20th Annual {ACM-SIAM} Symposium on
  Discrete Algorithms, {SODA} 2009}, pages 506--515, 2009.

\bibitem{FF62}
D.~R. Ford and D.~R. Fulkerson.
\newblock {\em Flows in Networks}.
\newblock Princeton University Press, 2010.

\bibitem{GeorgiadisGIPU17}
Loukas Georgiadis, Daniel Graf, Giuseppe~F. Italiano, Nikos Parotsidis, and
  Przemyslaw Uznanski.
\newblock All-pairs 2-reachability in o(n{\^{}}w log n) time.
\newblock In Ioannis Chatzigiannakis, Piotr Indyk, Fabian Kuhn, and Anca
  Muscholl, editors, {\em 44th International Colloquium on Automata, Languages,
  and Programming, {ICALP} 2017, July 10-14, 2017, Warsaw, Poland}, volume~80
  of {\em LIPIcs}, pages 74:1--74:14. Schloss Dagstuhl - Leibniz-Zentrum
  f{\"{u}}r Informatik, 2017.
\newblock URL: \url{https://doi.org/10.4230/LIPIcs.ICALP.2017.74}.

\bibitem{lokshtanov2019brief}
Daniel Lokshtanov, Pranabendu Misra, Saket Saurabh, and Meirav Zehavi.
\newblock A brief note on single source fault tolerant reachability, 2019.
\newblock \href {https://arxiv.org/abs/1904.08150} {\path{arXiv:1904.08150}}.

\bibitem{10.1007/978-3-642-32512-0_24}
Dana Moshkovitz.
\newblock The projection games conjecture and the np-hardness of ln
  n-approximating set-cover.
\newblock In Anupam Gupta, Klaus Jansen, Jos{\'e} Rolim, and Rocco Servedio,
  editors, {\em Approximation, Randomization, and Combinatorial Optimization.
  Algorithms and Techniques}, pages 276--287, Berlin, Heidelberg, 2012.
  Springer Berlin Heidelberg.

\bibitem{NagamochiI:92}
Hiroshi Nagamochi and Toshihide Ibaraki.
\newblock A linear-time algorithm for finding a sparse k-connected spanning
  subgraph of a k-connected graph.
\newblock {\em Algorithmica}, 7(5{\&}6):583--596, 1992.

\bibitem{Parter15}
Merav Parter.
\newblock Dual failure resilient {BFS} structure.
\newblock In {\em Proceedings of the 2015 {ACM} Symposium on Principles of
  Distributed Computing, {PODC} 2015}, pages 481--490, 2015.

\bibitem{PP13}
Merav Parter and David Peleg.
\newblock Sparse fault-tolerant {BFS} trees.
\newblock In {\em Algorithms - {ESA} 2013 - 21st Annual European Symposium,
  Proceedings}, pages 779--790, 2013.

\bibitem{PP14}
Merav Parter and David Peleg.
\newblock Fault tolerant approximate {BFS} structures.
\newblock In {\em Proceedings of the Twenty-Fifth Annual {ACM-SIAM} Symposium
  on Discrete Algorithms, {SODA} 2014}, pages 1073--1092, 2014.

\bibitem{BrandS19}
Jan van~den Brand and Thatchaphol Saranurak.
\newblock Sensitive distance and reachability oracles for large batch updates.
\newblock In {\em 60th {IEEE} Annual Symposium on Foundations of Computer
  Science, {FOCS} 2019, Baltimore, Maryland, USA, November 9-12, 2019}, pages
  424--435, 2019.

\end{thebibliography}

\end{document}